\colorlet{dblue}{blue!40!black}
\theoremstyle{plain}
\newtheorem{proposition}[theorem]{Proposition}
\theoremstyle{definition}
\newcommand{\shorten}[1]{}
 \renewcommand{\hbar}{\overline{h}}
\newcommand{\set}[1]{\{ #1 \}}
\renewcommand{\phi}{\varphi}
\newcommand{\nat}{\mathbb{N}}
\newcommand{\emptyword}{\varepsilon}
\newcommand{\thuemorse}{\mbox{Thue\hspace{.08em}--Morse}}
\newcommand{\pairlft}{(}
\newcommand{\pairrgt}{)}
\newcommand{\pairsep}{{,\,}}
\newcommand{\pairstr}[1]{\pairlft#1\pairrgt}
\newcommand{\pair}[2]{\pairstr{#1\pairsep#2}}
\newcommand{\triple}[2]{\pair{#1\pairsep#2}}
\newcommand{\quadruple}[2]{\triple{#1\pairsep#2}}
\newcommand{\quintuple}[2]{\quadruple{#1\pairsep#2}}
\newcommand{\sixtuple}[2]{\quintuple{#1\pairsep#2}}
\newcommand{\funap}[2]{#1(#2)}
\newcommand{\bfunap}[3]{#1(#2,#3)}
\newcommand{\shastyp}{{:}}
\newcommand{\hastyp}[2]{#1 \mathrel{\shastyp} #2}
\newcommand{\sub}[2]{#1_{#2}}
\newlength{\mathfrwidth}
\newsavebox{\mathfrbox}
\newcommand{\rem}[1]{\relax}
\newcommand{\nb}{\nobreakdash}
\newcommand{\blocks}[1]{\mathsf{blocks}(#1)}
\newcommand{\unblock}[1]{\mathsf{cat}(#1)}
\newcommand{\img}[1]{\mathsf{img}(#1)}
\newcommand{\erase}[2]{\gamma_{#1}(#2)}
\newcommand{\tail}[1]{\mathsf{tail}(#1)}
\newcommand{\head}[1]{\mathsf{head}(#1)}
\newcommand{\occ}[2]{|#2|_{#1}} 
\newcommand{\smatrix}[1]{M_{#1}} 
\newcommand{\floor}[1]{\lfloor #1 \rfloor}
\newcommand{\dead}{\mathcal{D}}
\newcommand{\aalph}{\Sigma}
\newcommand{\balph}{\Gamma}
\newcommand{\afst}{M} 
\newcommand{\firstfst}{\afst} 
\newcommand{\secondfst}{N} 
\newcommand{\alphin}{\Sigma}
\newcommand{\alphout}{\Delta}
\newcommand{\states}{Q} 
\newcommand{\astate}{q} 
\newcommand{\istate}{\sub{\astate}} 
\newcommand{\stransfun}{\delta} 
\newcommand{\transfun}{\bfunap{\stransfun}} 
\newcommand{\transfunz}{\funap{\stransfun}} 
\newcommand{\soutfun}{\lambda} 
\newcommand{\outfun}{\bfunap{\soutfun}} 
\newcommand{\outfunz}{\funap{\soutfun}} 
\newcommand{\outa}[2]{\funap{#1}{#2}} 
\title{Eigenvalues and Transduction of Morphic Sequences: Extended Version\footnote{
  This is an extended version of our paper~\cite{dlt} presented at \emph{Developments in Language Theory 2014}.
  This extended version contains examples and additional remarks.
}}
\author[1]{David Sprunger}
\author[2]{William Tune}
\author[3]{J\"{o}rg Endrullis}
\author[4]{Lawrence S.~Moss}
\affil[1]{Department of Mathematics, Indiana University, Bloomington IU 47405 USA.}
\affil[2]{Department of Mathematics, Indiana University, Bloomington IU 47405 USA.}
\affil[3]{Vrije Universiteit Amsterdam, Department of Computer Science, 
    1081 HV Amsterdam, The Netherlands;  and
    Department of Mathematics, Indiana University, Bloomington IU 47405 USA.
}  
\affil[4]{
   Department of Mathematics, Indiana University, Bloomington IU 47405 USA.
   This work was partially supported by a grant from the Simons Foundation (\#245591 to Lawrence Moss).
}
\authorrunning{D. Sprunger, W. Tune, J. Endrullis and L. S.~Moss}
\begin{document}

\maketitle

\begin{abstract}
  We study finite state transduction of automatic and morphic sequences.
  Dekking~\cite{dekk:94} proved that
  morphic sequences are closed under transduction and in particular morphic images.
  We present a simple proof of this fact,
  and use the construction in the proof to show that non\nb-erasing transductions preserve
  a condition called  $\alpha$-substitutivity. 
  Roughly, a sequence is $\alpha$-substitutive if the sequence can be obtained as 
  the limit of iterating a substitution with dominant eigenvalue $\alpha$.
 Our results culminate in the following fact:  for multiplicatively independent real numbers $\alpha$ and $\beta$,
  if $v$ is a $\alpha$-substitutive sequence and $w$ is an $\beta$-substitutive sequence, then $v$ and $w$ have
  no common non-erasing transducts except for the ultimately periodic sequences. 
  We rely on Cobham's theorem for substitutions, a recent result of Durand~\cite{dura:2011}.
\end{abstract}

\section{Introduction}

Infinite sequences of symbols are of paramount importance in a wide range of fields,
ranging from formal languages to pure mathematics and physics.
A landmark was the discovery in 1912 by Axel Thue, founding father of formal language theory,
of the famous sequence
$
  0110\; 1001\; 1001\; 0110\; 1001\; 0110\; \cdots 
$.
Thue was interested in infinite words which avoid certain patterns,
like squares $ww$ or cubes $www$, when $w$ is a non-empty word.
Indeed, the sequence shown above, called the  \thuemorse{} sequence, is cube-free.  
It is perhaps the most natural cube-free infinite word.

\begin{wrapfigure}{r}{5cm}
\vspace{-4ex}
\begin{center}
  \includegraphics{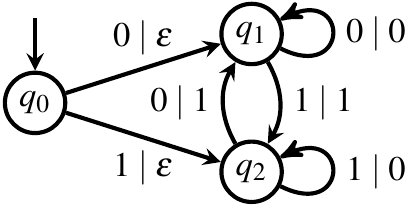}
  \vspace{-.2cm}
  \caption{A transducer computing the difference (exclusive or) of consecutive bits.}
  \label{fig:fst}
\end{center}
\vspace{-3ex}
\end{wrapfigure}
A common way to transform infinite sequences is by using \emph{finite state transducers}.
These transducers are deterministic finite automata with input letters and output words for each 
transition; an example is shown in Figure~\ref{fig:fst}.  Usually we omit the words ``finite state'' 
and refer to \emph{transducers}. A transducer maps infinite sequences to infinite sequences by 
reading the input sequence letter by letter. Each of these transitions produces an output word,
and the sequence formed by concatenating each of these output words in the order they 
were produced is the output sequence. In particular, since this transducer runs for infinite time 
to read its entire input, this model of transduction does not have final states. A transducer is called 
\emph{$k$-uniform} if each step produces $k$-letter words.
For example,  \emph{Mealy machines}  are  $1$-uniform transducers.
A transducer is \emph{non-erasing} if each step produces a non-empty word;
this condition is prominent in this paper.

Although transducers are a natural machine model,
 hardly anything is known about their capabilities of transforming infinite sequences.
To state the issues more clearly, let us write $x \trianglelefteq y$ if there is a transducer taking $y$ to $x$.
This transducibility gives rise to a partial order of \emph{stream degrees}~\cite{endr:hend:klop:2011}
that is analogous to, but more fine-grained than,  recursion-theoretic orderings such as
\emph{Turing reducibility} $\leq_T$ and \emph{many-one reducibility} $\leq_m$.
We find it surprising that so little is known about $\trianglelefteq$.     
As of now, the structure of this order is vastly unexplored territory with many open questions.
To answer these questions, we need  a better understanding of  transducers.

The main things that are known at this point concern two particularly well-known sets of streams, namely the \emph{morphic} and \emph{automatic} sequences.
Morphic sequences are obtained as the limit of iterating a morphism on a starting word 
(and perhaps applying a coding to the limit word).
Automatic sequences have a number of independent characterizations (see~\cite{allo:shal:2003}); we shall not
repeat these here. 
 There are two seminal closure results concerning the transduction of morphic and automatic sequences:
\begin{enumerate}
\item[(1)] The class of morphic sequences is closed under  transduction (Dekking~\cite{dekk:94}).
\item[(2)] For all $k$, the class of $k$-automatic sequences  is closed under uniform transduction (Cobham~\cite{cobh:72}).
\end{enumerate}

The restriction in (2) to uniform transducers is shown by the following example.
\begin{example}\label{ex:automatic:not:closed}
  Let $w \in \{\,0,1\,\}^\omega$ be defined by $w(n) = 1$ if $n$ is a power of $2$ and $w(n) = 0$ otherwise.
  This sequence is $2$-automatic.
  Let $h$ be the morphism $0 \mapsto 0$ and $1 \mapsto 01$.
  Taking the image of $w$ under $h$, that is $h(w)$, yields a sequence that is no 
  longer automatic (but still morphic).
  Here is a sketch that $h(w)$ is not $2$-automatic. 
  Note that the $i^{th}$ digit in $h(w)$ is $1$ iff $i = 2^n + n$ for
  some $n$.   Suppose that $M$ is a finite-state machine 
  with the property that reading in each number $i$ in binary yields the $i^{th}$ digit of $h(w)$.
  Let $N$ be large enough so that the binary representation of $2^N+N$ has a run of zeroes longer than the number of
  states in $N$.  Then by pumping, $N$ must accept a number which is not of the form $2^n+n$.
\end{example}

In this paper,
we do not attack the central problems concerning the stream degrees.
Instead, 
 we are interested in a closure result for non-erasing transductions.  Our interest comes from the following easy observation:
\begin{enumerate}
\item[(3)]\label{item:everything}
 For every morphic sequence $w \in \Sigma^\omega$ 
there is a $2$-automatic sequence $w' \in (\Sigma\cup \{\,a\,\})^\omega$ 
such that $w$ is obtained from $w'$ by erasing all occurrences of \shorten{the letter }$a$.
(See Allouche and Shallit~\cite[Theorem 7.7.1]{allo:shal:2003})
\end{enumerate}
\smallskip

\noindent
This motivates the question: how powerful is  non-erasing transduction?

\paragraph*{Our contribution}
The main result of this paper is stated in terms of the notion of \emph{$\alpha$-substitutivity}.
This condition is defined in Definition~\ref{definition-alpha-substitutive} below, and the definition uses the eigenvalues of matrices naturally associated with morphisms on finite alphabets.   Indeed, the core of our work is
a collection of results on eigenvalues of these matrices.

 We prove that  the set of $\alpha$-substitutive words is closed
under non\nb-erasing finite state transduction.
We follow Allouche and Shallit~\cite{allo:shal:2003}
in obtaining   transducts of a given morphic sequence $w$ by
 \emph{annotating} an iteration morphism, 
and then taking a morphic image of the annotated limit sequence.
For the first part of this transformation, 
we show that a morphism and its annotation have the same 
eigenvalues with non-negative eigenvectors.
For the second part, 
we revisit the proof given in Allouche and Shallit~\cite{allo:shal:2003}  of Dekking's theorem
\shorten{mentioned in (1) above }%
that morphic images of morphic sequences are morphic.
We simplify the construction in the proof to make it amenable
for an analysis of the eigenvalues of the resulting~morphism.
\shorten{As a  side-effect of the simplified construction, we shorten the proof of (1) significantly.}

\paragraph*{Related work}
Durand~\cite{dura:2011} proved that 
if $w$ is an $\alpha$-substitutive sequence and $h$ is a non-erasing morphism,
then $h(w)$ is $\alpha^k$-substitutive for some $k\in\nat$.
We strengthen this result in two directions. First, we show that  $k$ may be taken to be $1$;
 hence $h(w)$  is $\alpha^k$-substitutive for every $k\in\nat$.
Second, we show that Durand's result also holds for non-erasing transductions.


\section{Preliminaries}

We recall some of the main concepts that we use in the paper.
For a thorough introduction to morphic sequences, automatic sequences and finite state transducers,
we refer to~\cite{allo:shal:2003,saka:03}. 

We are concerned with infinite sequences $\aalph^\omega$ over a finite alphabet $\aalph$.
We write $\aalph^*$ for the set of finite words,
$\aalph^+$ for the finite, non-empty words, 
$\aalph^\omega$ for the infinite words, 
and $\aalph^\infty = \aalph^* \cup \aalph^\omega$ for all finite or infinite words over $\aalph$.

\subsection{Morphic sequences and automatic sequences}

\begin{definition}
  A \emph{morphism} is a map 
  $h : \aalph \to \balph^*$.
  This map extends by concatenation to 
  $h : \aalph^* \to \balph^*$, and we do not distinguish the two notationally.
  Notice also that $h(vu) = h(v)h(u)$ for all $u,v\in \aalph^*$.
  If $h_1, h_2 :\aalph \to \aalph^*$, we have a composition 
  $h_2\circ h_1: \aalph \to \aalph^*$.

  An \emph{erased letter} (with respect to $h$) is some $a\in \aalph$ such that $h(a) = \emptyword$.
  A morphism $h : \aalph^* \to \balph^*$ is called \emph{erasing} if has an erased letter.
  A morphism is \emph{$k$-uniform} (for $k\in\nat$) if $|h(a)| = k$ for all $a \in \aalph$.
  A \emph{coding} is a $1$-uniform morphism $c : \aalph \to \balph$. 
\end{definition}

A morphic sequence is obtained by iterating a morphism, and applying a coding to the limit word.
\begin{definition}
  Let $s \in \aalph^+$ be a word, $h : \aalph \to \aalph^*$ a morphism, and $c : \aalph \to \balph$ a coding. 
  If the limit     
  $
    h^\omega(s) = \lim_{n\to \infty} h^n(s)
  $ 
  exists and is infinite, then $h^\omega(s)$ is a \emph{pure morphic sequence},
  and $c(h^\omega(s))$ a \emph{morphic sequence}.
  
  If $h(x_1) = x_1 z$ for some  $z\in \Sigma^+$, 
  then we say that $h$ is \emph{prolongable} on $x_1$.  
  In this case,  $h^\omega(x_1)$ is a pure morphic sequence.

  If additionally, the morphism $h$ is $k$-uniform, then $c(h^\omega(s))$ is a \emph{$k$-automatic sequence}.
  A sequence $w \in \aalph^\omega$ is called \emph{automatic} if $w$ is $k$-automatic for some $k\in\nat$.
\end{definition}

\begin{example}
  A well-known example of a purely morphic word is the \thuemorse{} sequence.
  This sequence can be obtained as the limit of iterating the morphism
  $0 \mapsto 01$, $1 \mapsto 10$
  on the starting word~$0$.
  The first iterations are
  $$0 
    \mapsto\ 01 
    \mapsto\ 0110 
    \mapsto\ 01101001
    \mapsto\ 01101001 10010110
    \mapsto\ \cdots \;,
  $$
  and they converge, in the limit, to the \thuemorse{} sequence.
  As the morphism $h$ is $2$\nobreakdash-uniform, the sequence is also $2$-automatic.
\end{example}

\begin{example}\label{example-Fibonacci}
  An example of a purely morphic word which is not automatic is provided by the 
  \emph{Fibonacci substitution} $a\mapsto ab$, $b\mapsto a$.
  Starting with $a$, the fixed point is $$abaababaabaababaababaabaababaabaababaaba\cdots\;.$$
\end{example}

\subsection{Cobham's Theorem for morphic words}

\begin{definition}
  For $a \in \aalph$ and $w \in \aalph^*$ 
  we write $\occ{a}{w}$ for the \emph{number of occurrences of $a$ in $w$}. 
  Let $h$ be a morphism over $\aalph$. 
  The \emph{incidence matrix} of $h$ is the matrix $\smatrix{h} = (m_{i,j})_{i\in \aalph,j\in\aalph}$
  where $m_{i,j} = \occ{i}{h(j)}$ is the number of occurrences of the letter $i$ in the word $h(j)$.
\end{definition}

\begin{theorem}[Perron-Frobenius]\label{thm:perron}
  Every non-negative square matrix $M$ 
  has a real eigenvalue $\alpha \ge 0$ 
  that is greater than or equal to the absolute value of any other eigenvalue of $M$
  and the corresponding eigenvector is non-negative.
  We refer to $\alpha$ as the \emph{dominating eigenvalue} of $M$.
\end{theorem}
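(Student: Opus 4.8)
The statement has two parts: the existence of a real eigenvalue $\alpha \ge 0$ with a non-negative eigenvector, and the dominance $\alpha \ge |\lambda|$ for every other eigenvalue $\lambda$. The plan is to produce the eigenpair by a fixed-point argument on the probability simplex $\Delta = \{\, x \in \mathbb{R}^n : x \ge 0,\ \sum_i x_i = 1 \,\}$, carried out first for a strictly positive perturbation of $M$ and then transported to $M$ by a compactness argument; dominance will then follow by identifying $\alpha$ with the spectral radius $\rho(M) = \max\{\, |\lambda| : \lambda \text{ is an eigenvalue of } M \,\}$.

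For existence, I would first observe that the map $x \mapsto Mx/\|Mx\|_1$ would immediately do the job via Brouwer's fixed-point theorem---a fixed point $x$ satisfies $Mx = \|Mx\|_1\, x$---were it not that $Mx$ can vanish for a merely non-negative $M$. To sidestep this I would work with $M_\varepsilon = M + \varepsilon J$, where $J$ is the all-ones matrix and $\varepsilon > 0$. Then $M_\varepsilon$ is strictly positive, so $f_\varepsilon(x) = M_\varepsilon x/\|M_\varepsilon x\|_1$ is a continuous self-map of the compact convex set $\Delta$, and Brouwer supplies a fixed point $x_\varepsilon \in \Delta$ with $M_\varepsilon x_\varepsilon = \rho_\varepsilon x_\varepsilon$, where $\rho_\varepsilon = \|M_\varepsilon x_\varepsilon\|_1 > 0$ and $x_\varepsilon > 0$. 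Letting $\varepsilon \to 0$, the vectors $x_\varepsilon$ stay in the compact set $\Delta$, so I would pass to a convergent subsequence $x_{\varepsilon_k} \to x^\ast \in \Delta$; the scalars $\rho_{\varepsilon_k}$ then converge to some $\alpha \ge 0$, and continuity gives $Mx^\ast = \alpha x^\ast$. Since $x^\ast \in \Delta$ it is non-negative and non-zero, which settles existence.

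For dominance I would first handle the positive matrices $M_\varepsilon$. Because $x_\varepsilon > 0$, an elementary ratio argument shows $\rho_\varepsilon = \rho(M_\varepsilon)$: if $M_\varepsilon w = \mu w$ with $w \ne 0$, put $t = \max_i |w_i|/(x_\varepsilon)_i$, attained at some index $j$, so that $|\mu|\,t\,(x_\varepsilon)_j = |(M_\varepsilon w)_j| \le \sum_i (M_\varepsilon)_{ji}\,|w_i| \le t\,(M_\varepsilon x_\varepsilon)_j = t\,\rho_\varepsilon (x_\varepsilon)_j$, whence $|\mu| \le \rho_\varepsilon$. Thus $\rho_\varepsilon$ is the spectral radius of $M_\varepsilon$. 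I would then invoke continuity of the spectral radius in the matrix entries---eigenvalues are the roots of the characteristic polynomial, whose coefficients depend polynomially on the entries and whose roots depend continuously on the coefficients---to conclude $\rho(M_\varepsilon) \to \rho(M)$ as $\varepsilon \to 0$. Hence $\alpha = \lim_k \rho_{\varepsilon_k} = \rho(M)$, and by definition of the spectral radius $\alpha = \rho(M) \ge |\lambda|$ for every eigenvalue $\lambda$ of $M$.

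The main obstacle is the dominance half, and within it the passage to the limit: both of the facts it leans on---that a positive matrix's fixed-point eigenvalue is exactly its spectral radius, and that the spectral radius varies continuously with the entries---are where the real work sits. The delicacy is that a non-negative $M$ may have several eigenvalues of maximal modulus (for instance a cyclic permutation matrix), so none of the sharper features of the positive case---simplicity of $\alpha$, strict positivity of $x^\ast$, strict dominance---survive the limit; only the weak inequality $\alpha \ge |\lambda|$ does, which is exactly what the statement claims. A conceptually cleaner but heavier alternative to the continuity step would be the Collatz--Wielandt characterization $\rho(M) = \max_{x \ge 0,\, x \ne 0} \min_{i : x_i > 0} (Mx)_i/x_i$, together with the subinvariance inequality $|\lambda|\,|v| \le M|v|$ (valid for any eigenvector $v$ since $M \ge 0$); but establishing that characterization is essentially as hard as the theorem itself, so I would prefer the perturbation route above.
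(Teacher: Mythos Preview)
The paper does not prove this statement at all; it is stated as the classical Perron--Frobenius theorem and used as a black box throughout, so there is no ``paper's own proof'' to compare against.

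Your argument is correct and follows one of the standard routes: perturb to a strictly positive matrix $M_\varepsilon = M + \varepsilon J$, obtain a positive eigenpair via Brouwer on the simplex, identify the fixed-point eigenvalue with the spectral radius by the ratio argument (which genuinely needs $x_\varepsilon > 0$, available precisely because $M_\varepsilon$ is positive), and then pass to the limit using compactness of $\Delta$ together with continuity of the spectral radius. One small point worth making explicit: once $x_{\varepsilon_k} \to x^\ast$ along a subsequence, the convergence $\rho_{\varepsilon_k} \to \alpha$ is automatic since $\rho_{\varepsilon_k} = \|M_{\varepsilon_k} x_{\varepsilon_k}\|_1 \to \|M x^\ast\|_1$, so no separate subsequence is needed for the scalars. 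Your closing remark about the cyclic-permutation example is apt: it shows exactly why only the weak inequality $\alpha \ge |\lambda|$ survives the limit, and why one should not expect simplicity or strict positivity of the eigenvector in the merely non-negative case.
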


\begin{definition}\label{definition-alpha-substitutive}
  The \emph{dominating eigenvalue} of a morphism $h$ is the dominating eigenvalue of $\smatrix{h}$.
  An infinite sequence $w \in \aalph^\omega$ 
  over a finite alphabet $\aalph$
  is said to be \emph{$\alpha$-substitutive} ($\alpha \in \mathbb{R}$)
  if there exist
  a morphism $h : \aalph \to \aalph^*$ with dominating eigenvalue $\alpha$,
  a coding $c : \aalph \to \aalph$
  and a letter $a \in \aalph$
  such that
  (i)~$w = c(h^\omega(a))$,
  and (ii) every letter of $\aalph$ occurs in $h^\omega(a)$.
\end{definition}

\begin{remark}
  Let us remark on the importance of the condition (ii) in Definition~\ref{definition-alpha-substitutive}.
  Without this condition every $\alpha$-substitutive sequence $w \in \Sigma^\omega$ would
  also be $\beta$-substitutive for every $\beta > \alpha$ that is the dominating eigenvalue of a non-negative integer matrix.

  This can be seen as follows.
  Let $h : \Sigma \to \Sigma^*$ be a morphism with dominating eigenvalue $\alpha$.
  Let $a \in \Sigma$ such that $w = h^\omega(a)$ exists, is infinite and contains all letters from $\Sigma$.
  Then $w$ is $\alpha$-substitutive.
  Now let $\beta > \alpha$ be the dominating eigenvalue of a non-negative integer matrix.
  Then there exists an alphabet $\Gamma$ (disjoint from $\Sigma$, $\Gamma \cup \Sigma = \varnothing$) and 
  a morphism $g : \Gamma \to \Gamma^*$ with dominating eigenvalue $\beta$.
  Define $z : (\Sigma\cup \Gamma) \to (\Sigma\cup \Gamma)^*$
  by $z(b) = h(b)$ for all $b \in \Sigma$ and $z(c) = h(c)$ for all $c \in \Gamma$.
  Then $z^\omega(a) = h^\omega(a) = w$ and
  the dominating eigenvalue of $z$ is $\beta$.
\end{remark}

Two complex numbers $x,y$ are called \emph{multiplicatively independent}
if for all $k,\ell \in \mathbb{Z}$ it holds that $x^k = y^\ell$ implies $k = \ell = 0$.
We shall use the following version of Cobham's theorem due to Durand~\cite{dura:2011}.

\begin{theorem}\label{theorem-Durand}
  Let $\alpha$ and $\beta$ be multiplicatively independent Perron numbers.
  If a sequence $w$ is both $\alpha$-substitutive and $\beta$-substitutive,
  then $w$ is eventually periodic.
  \qed
\end{theorem}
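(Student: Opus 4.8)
The plan is to reduce the two substitutive presentations of $w$ to a setting of \emph{primitive} substitutions, where the dominating eigenvalue genuinely governs the metric structure of the sequence, and then to extract from the geometry of occurrences a multiplicative relation between $\alpha$ and $\beta$. Write $w = c_1(h_1^\omega(a_1)) = c_2(h_2^\omega(a_2))$, where $h_1$ has dominating eigenvalue $\alpha$, $h_2$ has dominating eigenvalue $\beta$, and every letter occurs in each fixed point. Assume, for contradiction, that $w$ is \emph{not} eventually periodic. The first step is a cleaning step: using the structure theory of the incidence matrices $\smatrix{h_1}$ and $\smatrix{h_2}$ --- separating the recurrent letters from the transient ones and passing to suitable powers $h_i^{s_i}$ --- one replaces each presentation by a primitive substitutive one. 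It is here that the hypothesis of non-periodicity is used, to guarantee that such primitive cores exist and that their dominating eigenvalues are $\alpha^{s_1}$ and $\beta^{s_2}$. These remain multiplicatively independent: if $(\alpha^{s_1})^k = (\beta^{s_2})^\ell$, then $\alpha^{s_1 k} = \beta^{s_2 \ell}$, forcing $s_1 k = s_2 \ell = 0$ and hence $k = \ell = 0$. So it suffices to treat the primitive case, and I henceforth take $h_1$ and $h_2$ primitive.

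For a primitive substitution the fixed point, and hence $w$, is \emph{linearly recurrent}: there is a constant $K$ such that every factor of length $n$ occurs in every window of length $Kn$. The engine of the argument is the theory of \emph{return words} and \emph{derived sequences}. Fixing a factor $u$ of $w$, the set of return words to $u$ is finite --- this finiteness is exactly linear recurrence --- and recoding $w$ over this bounded return alphabet yields a \emph{derived sequence}, which is again primitive substitutive. Letting $u$ range over the prefixes $h_1^n(a_1)$, the derived sequences stabilize by self-similarity, and the return-word lengths grow proportionally to $\alpha^n$, since $|h_1^n(a_1)|$ grows like $\alpha^n$ by Perron--Frobenius. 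Applying the identical construction to the $\beta$-presentation produces return words of lengths growing like $\beta^n$. In this way both growth rates are read off from one and the same underlying sequence of factors of $w$.

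The hard part --- and the true content of the theorem --- is to play the two growth rates against each other. Choosing a common factor $u$ and comparing its $\alpha$-derived and $\beta$-derived structures, the boundedness of the two derived alphabets forces, by a pigeonhole argument over sufficiently many iterations of the derivation operation, a commensurability between some return-word length arising on the $\alpha$-side and some return-word length arising on the $\beta$-side. Tracking the exponents in such a coincidence produces a relation $\alpha^k = \beta^\ell$ with $(k,\ell) \ne (0,0)$, contradicting multiplicative independence. I expect this commensurability argument to be the main obstacle: one must control not merely the leading growth $\alpha^n$ but the entire profile of return-word lengths, and show that their incommensurability under the Perron and multiplicative-independence hypotheses cannot coexist with $w$ carrying both presentations. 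This is exactly where the algebraic and combinatorial inputs meet, and it absorbs the bulk of the work. Since the whole argument was conditional on $w$ being non-periodic, the contradiction establishes that $w$ is eventually periodic.
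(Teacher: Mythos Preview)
The paper does not prove this theorem at all: it is stated as a known result due to Durand~\cite{dura:2011}, signalled by the terminal \qed, and is used as a black box later in Theorem~\ref{theorem-apply-Durand}. So there is no ``paper's own proof'' to compare against.

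That said, your sketch is in the spirit of Durand's actual argument --- reduction to the primitive case, then an analysis via return words and derived sequences --- but it is not a proof as written. Two points deserve flagging. First, the reduction step is more delicate than you indicate: one must show that a non-periodic $\alpha$-substitutive sequence is $\alpha$-substitutive \emph{primitive}, and this is where the Perron hypothesis on $\alpha$ genuinely enters (it is not merely a technical convenience). Second, and more seriously, the ``commensurability'' step you identify as the main obstacle is indeed the entire content of the theorem, and your pigeonhole heuristic does not establish it. Durand's argument does not extract a relation $\alpha^k=\beta^\ell$ directly from coinciding return-word lengths; rather, he shows that the set of derived sequences of a primitive substitutive sequence is \emph{finite}, and then that a sequence with finitely many derived sequences is $\gamma$-substitutive for a specific $\gamma$ determined intrinsically by the sequence. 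Playing this against both presentations yields the multiplicative dependence. Your growth-rate picture is suggestive but does not by itself produce the required integer exponents; the finiteness of the derived-sequence set is the missing structural ingredient.
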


\subsection{Transducers}

\begin{definition}
  A \emph{(sequential finite-state stream) transducer (FST)} 
  $\afst = \sixtuple{\alphin}{\alphout}{\states}{\istate{0}}{\stransfun}{\soutfun}$
  consists of
  \begin{enumerate}[(i)]
    \item a finite input alphabet $\alphin$,
    \item a finite output alphabet $\alphout$,
    \item a finite set of states $\states$,
    \item an initial state $\istate{0} \in \states$,
    \item a transition function $\hastyp{\stransfun}{\states\times \alphin \to \states}$, and
    \item an output function $\hastyp{\soutfun}{\states\times \alphin \to \alphout^*}$.
  \end{enumerate}
\end{definition}

\begin{example}
The  transducer $\sixtuple{\alphin}{\alphout}{\states}{\istate{0}}{\stransfun}{\soutfun}$ 
shown in Figure~\ref{fig:fst} 
can be defined as follows:
$\alphin = \alphout = \{\,0,1\,\}$,
$\states = \{\,\istate{0},\istate{1},\istate{2}\,\}$ with $q_0$ the initial state, 
and the transition function $\stransfun$ and output function $\soutfun$ are given by:
\begin{align*}
  \bfunap{\stransfun}{\istate{0}}{0} &= \istate{1}
  & \bfunap{\soutfun}{\istate{0}}{0} &= \emptyword 
  & \bfunap{\stransfun}{\istate{0}}{1} &= \istate{2}
  & \bfunap{\soutfun}{\istate{0}}{1} &= \emptyword \\
  \bfunap{\stransfun}{\istate{1}}{0} &= \istate{1}
  & \bfunap{\soutfun}{\istate{1}}{0} &= 0
  & \bfunap{\stransfun}{\istate{1}}{1} &= \istate{2}
  & \bfunap{\soutfun}{\istate{1}}{1} &= 1 \\
  \bfunap{\stransfun}{\istate{2}}{0} &= \istate{1}
  & \bfunap{\soutfun}{\istate{2}}{0} &= 1
  & \bfunap{\stransfun}{\istate{2}}{1} &= \istate{2}
  & \bfunap{\soutfun}{\istate{2}}{1} &= 0
\end{align*}
\end{example}

We use transducers to transform infinite words.
The transducer reads the input word letter by letter,
and the transformation result is the concatenation of the output words encountered along the edges.

\begin{definition}\normalfont
  Let 
  $\afst = \sixtuple{\alphin}{\alphout}{\states}{\istate{0}}{\stransfun}{\soutfun}$
  be a  transducer.
  We extend the state transition function $\stransfun$ 
  from letters $\alphin$ to finite words $\alphin^*$ as follows:
  $\transfun{\astate}{\emptyword} = \astate$ and
  $\transfun{\astate}{aw} = \transfun{\transfun{\astate}{a}}{w}$
  for $\astate \in \states$, $a \in \alphin$, $w \in \alphin^*$.
  
  The output function $\soutfun$ is extended to the set of all words
  $\alphin^\infty = \alphin^\omega \cup \alphin^*$
  by the following definition:
  $\outfun{\astate}{\emptyword} = \emptyword$ and
  $\outfun{\astate}{aw} = \outfun{\astate}{a} \, \outfun{\transfun{\astate}{a}}{w}$
  for $\astate \in \states$, $a \in \alphin$, $w \in \alphin^\infty$.

  We introduce $\transfunz{w}$ and $\outfunz{w}$
  as shorthand for $\transfun{\istate{0}}{w}$ and $\outfun{\istate{0}}{w}$, respectively.
  Moreover, we define $\outa{\afst}{w} = \outfunz{w}$,
  the \emph{output of $\afst$ on $w \in \alphin^\omega$}.    In this way, we think of $\afst$ as a function
  from (finite or infinite) words on its input alphabet to infinite words on its output alphabet
  $\afst: \alphin^\infty\to\alphout^\infty$.   
  
  If $x\in \alphin^\omega$ and $y\in \alphout^\omega$, we write $y \trianglelefteq x$ if for some 
  transducer $\afst$, we have $\afst(x) = y$.
\end{definition}

Notice that every morphism is computable by a transducer (with one state).  
In particular, every coding is computable by a transducer.
\begin{definition}\label{definition-composition-transducers}
  Let $\firstfst = \sixtuple{\alphin}{\alphout}{\states}{\istate{0}}{\stransfun}{\soutfun}$
  and $\secondfst = \sixtuple{\alphin'}{\alphout'}{\states'}{\istate{0}'}{\stransfun'}{\soutfun'}$
  be transducers, and assume that $\alphin' = \alphout$.
  We define the \emph{composition} $\secondfst\circ\firstfst$ 
  to be the transducer  
  \begin{align*}
  \secondfst\circ\firstfst 
  = \sixtuple{\;\;&\alphin}
    {\;\;\alphout'}
    {\;\;\states\times \states'}
    {\;\;\pair{\istate{0}}{\istate{0}'}}
    {\\
    &\;\;\pair{\pair{q}{q'}}{a} \mapsto \pair{\transfun{q}{a}}{\bfunap{\stransfun'}{q'}{\outfun{q}{a}}}}
    {\\
    &\;\;\pair{\pair{q}{q'}}{a} \mapsto \soutfun'(q',\soutfun(q,a))\;\;}\;.
  \end{align*}
  Here $\stransfun'$ and  $\soutfun'$ are the extensions of the transition and output functions of $\secondfst$ to $\alphin^*$, respectively.
\end{definition}
\begin{proposition}\label{prop-basic}
  Concerning the composition relation on transducers and $\trianglelefteq$ on finite and infinite words:
  \begin{enumerate}[(i)]
    \item  
      The map $\alphin^\infty\to (\alphout')^\infty$ computed by \shorten{the transducer}%
      $\secondfst\circ\firstfst$ is \shorten{exactly }the composition
      of $\firstfst: \alphin^\infty\to\alphout^\infty$ followed by 
      $\secondfst: \alphout^\infty\to(\alphout')^\infty$.
    \item The relation $\trianglelefteq$ is transitive.
    \item  If $x\in \alphin^\infty$ and $h: \alphin\to\alphout^*$ is a coding, then $h(x) \trianglelefteq x$.
  \end{enumerate}
\end{proposition}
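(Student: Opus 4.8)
The plan is to prove (i) first, as the technical heart, and then derive (ii) and (iii) with little extra work. Before starting I would record a preliminary lemma, proved by a routine induction on word length from the recursive definitions of the extended functions: for every state $q$ and all finite words $u,v$ one has the transition identity $\transfun{q}{uv} = \transfun{\transfun{q}{u}}{v}$ and the output identity $\outfun{q}{uv} = \outfun{q}{u}\,\outfun{\transfun{q}{u}}{v}$, and the same identities hold for the primed functions of $\secondfst$.

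For (i), I would prove by induction on the length of a finite input $w \in \alphin^*$ the following invariant for $\secondfst\circ\firstfst$ started in $\pair{\istate{0}}{\istate{0}'}$: after reading $w$ the machine sits in state $\pair{\transfun{\istate{0}}{w}}{\bfunap{\stransfun'}{\istate{0}'}{\outfun{\istate{0}}{w}}}$ and has emitted $\bfunap{\soutfun'}{\istate{0}'}{\outfun{\istate{0}}{w}}$. Read informally, the first coordinate tracks where $\firstfst$ is on $w$, while the second coordinate and the emitted word track the state and output of $\secondfst$ run on whatever $\firstfst$ has produced so far. The base case $w=\emptyword$ is immediate. For the step, reading one further letter $a$, I unfold the transition and output clauses of Definition~\ref{definition-composition-transducers} and then apply the preliminary lemma with $u = \outfun{\istate{0}}{w}$ and $v = \outfun{\transfun{\istate{0}}{w}}{a}$; together with $\outfun{\istate{0}}{wa} = \outfun{\istate{0}}{w}\,\outfun{\transfun{\istate{0}}{w}}{a}$ this rewrites both state coordinates and the emitted word into exactly the form demanded by the invariant for $wa$. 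The output component of the invariant says $(\secondfst\circ\firstfst)(w) = \secondfst(\firstfst(w))$ for all finite $w$; since the output on an infinite $w\in\alphin^\omega$ is the least upper bound in the prefix ordering of the outputs on its finite prefixes, and all the maps involved are monotone and continuous for that ordering, the identity lifts to infinite inputs, which gives (i).

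Parts (ii) and (iii) are then short. For (ii), given $x \trianglelefteq y$ and $y \trianglelefteq z$, pick transducers with $\firstfst(y) = x$ and $\secondfst(z) = y$; their output and input alphabets agree over the alphabet of $y$, so the composite $\firstfst\circ\secondfst$ is defined, and by (i) we get $(\firstfst\circ\secondfst)(z) = \firstfst(\secondfst(z)) = \firstfst(y) = x$, witnessing $x \trianglelefteq z$. For (iii), realise the coding $h$ by the one-state transducer whose single state is initial, whose transition sends every pair back to that state, and whose output is $\pair{q}{a}\mapsto h(a)$; an immediate induction shows it outputs $h(x)$ on $x$, so $h(x)\trianglelefteq x$.

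I expect the only real friction to be the bookkeeping in the inductive step of (i): one must keep firmly in mind that the word fed to the extended functions of $\secondfst$ is the \emph{entire} output $\firstfst$ has emitted so far, and then match the concatenation produced by a single step of the composite against the concatenation prescribed by the output identity for $\soutfun'$. Once the preliminary lemma is available this is mechanical, and the passage from finite to infinite inputs is standard.
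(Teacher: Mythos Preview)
Your proof is correct. The paper itself does not supply a proof of this proposition; it is stated as a routine fact (the paper remarks just before it that every morphism is computable by a one-state transducer, which is essentially your argument for (iii), and leaves the rest implicit). Your induction for (i), together with the concatenation identities for the extended $\stransfun$ and $\soutfun$, is the standard verification, and your passage to infinite inputs via prefix limits is the expected continuity argument; (ii) and (iii) then follow exactly as you say.
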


\section{Closure of Morphic Sequences under Morphic Images}

\label{section-morphic-images}

\begin{definition}\normalfont
  Let $h : \Sigma^* \to \Sigma^*$ be  morphisms, and let $\Gamma \subseteq \Sigma$  be a set  of  letters.
  We call a letter $a\in\Sigma$
  \begin{enumerate}[(i)]
    \item \emph{dead} if $h^n(a) \in \Gamma^*$ for all $n \ge 0$,
    \item \emph{near dead} if $a\notin \Gamma$, and for all $n >0$,   $h^n(a)$ consists of dead letters,
    \item \emph{resilient} if $h^n(a) \not\in \Gamma^*$ for all $n \ge 0$,
    \item \emph{resurrecting} if $a\in\Gamma$ and $h^n(a) \not\in \Gamma^*$ for all $n > 0$.
  \end{enumerate} 
  \smallskip
  
  \noindent
  with respect to $h$ and $\Gamma$.
  We say that the morphism $h$ \emph{respects} $\Gamma$
  if every letter $a \in \Sigma$ is either dead, near dead, resilient, or resurrecting.
  (Note that all of these definitions are with respect to some fixed $h$ and $\Gamma$.)
\end{definition}

\begin{lemma}\label{lem:respect}
  Let $g: \Sigma^* \to \Sigma^*$ be a morphism, and let $\Gamma \subseteq \Sigma$\shorten{ be a set of letters}.
  Then $g^r$ respects $\Gamma$ for some natural number $r > 0$.
\end{lemma}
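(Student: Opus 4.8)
The plan is to reduce the statement to the eventual periodicity of a finite dynamical system on the power set $2^\Sigma$. For a letter $a$ and $n \ge 0$, let $R_n(a) \subseteq \Sigma$ denote the set of letters occurring in $g^n(a)$, and set $b_n(a) = 1$ if $g^n(a) \in \Gamma^*$ and $b_n(a) = 0$ otherwise; note that $b_n(a) = 1$ precisely when $R_n(a) \subseteq \Gamma$, and that $b_0(a) = 1$ iff $a \in \Gamma$. The key observation is that $R_{n+1}(a)$ depends only on $R_n(a)$: if we define $\Phi : 2^\Sigma \to 2^\Sigma$ by letting $\Phi(S)$ be the set of letters occurring in $g(c)$ over all $c \in S$, then $R_{n+1}(a) = \Phi(R_n(a))$, so that $(R_n(a))_{n \ge 0}$ is the orbit of $\{a\}$ under the single map $\Phi$. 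Since $2^\Sigma$ is finite, this orbit is eventually periodic; hence each sequence $(b_n(a))_{n \ge 0}$ is eventually periodic as well, and since there are finitely many letters we may fix a common preperiod bound $P$ and a common period $L$.

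I would then take $r$ to be any multiple of $L$ with $r \ge P$. For such $r$ and every letter $a$, each index $rn$ with $n \ge 1$ satisfies $rn \ge P$ and $rn - r = r(n-1) \equiv 0 \pmod{L}$, so $R_{rn}(a) = R_r(a)$ and hence $b_{rn}(a) = b_r(a)$ is constant for $n \ge 1$. Writing $v(a) = b_r(a)$ for this eventual value, each letter falls into one of four patterns of the sequence $(b_{rn}(a))_{n \ge 0}$. If $a \in \Gamma$ and $v(a) = 1$ then $g^{rn}(a) \in \Gamma^*$ for all $n \ge 0$, so $a$ is dead for $g^r$; if $a \notin \Gamma$ and $v(a) = 0$ then $a$ is resilient; and if $a \in \Gamma$ and $v(a) = 0$ then $a \in \Gamma$ with $g^{rn}(a) \notin \Gamma^*$ for all $n > 0$, so $a$ is resurrecting.

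The remaining and most delicate case is $a \notin \Gamma$ with $v(a) = 1$, which I claim makes $a$ near dead for $g^r$; the obstacle here is the extra requirement that every letter of $(g^r)^n(a)$ be dead for $g^r$, rather than merely lying in $\Gamma^*$. I would verify this as follows. Any letter $c$ occurring in $g^r(a)$ lies in $R_r(a) \subseteq \Gamma$, so $(g^r)^0(c) = c \in \Gamma^*$; and for $m \ge 1$ the word $g^{rm}(c)$ is a factor of $g^{rm}(g^r(a)) = g^{r(m+1)}(a)$, which lies in $\Gamma^*$ since $b_{r(m+1)}(a) = v(a) = 1$. Thus $g^{rm}(c) \in \Gamma^*$ for all $m \ge 0$, i.e. $c$ is dead for $g^r$. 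Since any letter occurring in $g^r(d)$ for a $g^r$-dead letter $d$ is again $g^r$-dead (its $g^r$-iterates are factors of those of $d$), it follows by a one-step induction that for every $n \ge 1$ the word $(g^r)^n(a) = (g^r)^{n-1}(g^r(a))$ consists entirely of $g^r$-dead letters, which is exactly the near-dead condition. Having covered all four patterns, every letter is dead, near dead, resilient, or resurrecting for $g^r$, so $g^r$ respects $\Gamma$.
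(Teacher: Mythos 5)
Your proof is correct: the reduction to the eventually periodic orbit of $\{a\}$ under the "letters occurring in the image" map on the finite set $2^\Sigma$, the choice of $r$ as a positive multiple of the period beyond the preperiod, and the careful handling of the near-dead case (showing every letter of $g^r(a)$ is itself dead for $g^r$, and that deadness propagates) together give a complete argument. The paper does not prove this lemma itself but defers to Lemma 7.7.3 of Allouche and Shallit, whose proof is essentially this same pigeonhole/eventual-periodicity argument, so your proposal matches the intended approach.
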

\begin{proof}
  See Lemma 7.7.3 in Allouche and Shallit~\cite{allo:shal:2003}.
\end{proof}

\begin{definition}
  For a set of letters $\Gamma \subseteq \Sigma$ and a word $w \in \Sigma^\infty$,
  we write $\erase{\Gamma}{w}$
  for the word obtained from $w$ by erasing all occurrences of letters in $\Gamma$.
\end{definition}

\begin{definition}\normalfont\label{def:close:erase}
  Let $g : \Sigma^*\to \Sigma^*$ be a morphism, and $\Gamma \subseteq \Sigma$ a set of letters.
  We construct an alphabet $\Delta$, a morphism $\xi : \Delta^*\to \Delta^*$ 
  and a coding $\rho : \Delta \to \Sigma$ as follows.
  We refer to $\Delta,\xi,\rho$ as the \emph{morphic system associated with the erasure of $\Gamma$ from $g^\omega$}.

  Let $r\in\nat_{> 0}$ be minimal such that $g^r$ respects $\Gamma$ ($r$ exists by Lemma~\ref{lem:respect}).
  Let $\dead$ be the set of dead letters with respect to $g^r$ and $\Gamma$.
  For $x \in \Sigma^*$ we use brackets  $[x]$ to denote a new letter.
  For words $w \in \{ g^r(a) \mid a \in \Sigma \}$,
  whenever $\erase{\dead}{w} = w_0\;a_1w_1 \; a_2 w_2\; \cdots\; a_{k-1}w_{k-1}\; a_k w_{k}$
  with $a_1,\ldots,a_{k} \not\in \Gamma$ and $w_0,\ldots,w_{k} \in \Gamma^*$, we define
  \begin{align*}
    \blocks{w} &=
    [w_0a_1w_1] \; [a_2w_2] \;\cdots\; [a_{k-1}w_{k-1}] \; [a_k w_{k}] 
  \end{align*}
  Here it is to be understood that $\blocks{w} = \emptyword$ if $\erase{\dead}{w} = \emptyword$,
  and $\blocks{w}$ is \emph{undefined} if $\erase{\dead}{w} \in \Gamma^+$.

  Let the alphabet $\Delta$ consist of all letters $[a]$ 
  and all bracketed letters $[w]$ occurring in words $\blocks{g^r(a)}$ for $a \in \Sigma$.
  We define the morphism $\xi : \Delta \to \Delta^*$ 
  and the coding $\rho : \Delta \to \Sigma$ by
  \begin{align*}
    \xi([a_1\cdots a_k]) &=  \blocks{g^r(a_1)}\cdots \blocks{g^r(a_k)}
    & \rho([w \,a\, u]) &= a
  \end{align*}
  for $[a_1\cdots a_k] \in \Delta$ and $a\not\in\Gamma$, $w,u \in \Gamma^*$.
  For $a \in \Gamma$ we can define $\rho([a])$ arbitrarily, for example, $\rho(a) = a$.
\end{definition}

\begin{remark}\label{rem:respects}
  The requirement that $g^r$ respects~$\Gamma$ in Definition~\ref{def:close:erase}
  guarantees for every $a \in \Sigma$ that
  either $g^r(a)$ consists of dead letters only 
  or $g^r(a)$ contains at least one near dead or resilient letter. 
  In both cases, $\blocks{g^r(a)}$ is well-defined.
  As a consequence $\xi([w])$ is well-defined for every $[w] \in \Delta$.
\end{remark}

\begin{example}\label{ex:tribonacci:a}
  We let $\Sigma = \{\,a,b,c\,\}$ and define a morphism $g : \Sigma \to \Sigma^*$ by 
  $a \mapsto ab$, $b \mapsto ac$ and $c \mapsto a$.
  The word $g^\omega(a) = abacabaabacababacabaabacabacabaabacababa\cdots$ is known as the \emph{tribonacci word}.
  
  Let $\Gamma = \{\,a\,\}$, that is, we delete the letter $a$.
  The morphism $g$ does not respect $\Gamma$ since $g(c) = a \in \Gamma^*$ but $g^2(c) = ab \not\in \Gamma^*$. 
  However, $g^2$ respects $\Gamma$:
  $g^2(a) = a b a c$,
  $g^2(b) = a b a$ and
  $g^2(c) = a b$.
  The letter $a$ is resurrecting and $b,c$ are resilient with respect to $g^2$ and $\Gamma$.
  Definition~\ref{def:close:erase} yields $\Delta = \{\,[a],[b],[c],[ab],[a b a]\,\}$ and
  \begin{align*}
    \xi([a]) &= [a b a] [c] &
    \xi([b]) &= [a b a] &
    \xi([c]) &= [a b] \\
    \xi([a b]) &= [a b a] [c] [a b a] &
    \xi([a b a]) &= [a b a] [c] [a b a] [a b a] [c]
  \end{align*}
  while 
  $\rho([b]) = \rho([ab]) = \rho([aba]) = b$,
  $\rho([c]) = c$, 
  and $\rho([a]) = a$.
  %
  The starting letter for iterating $\xi$ is $[a]$
  (since the tribonacci word starts with $a$).
  The first iterations of $\xi$ are:
  \begin{align*}
    [a] 
    &\mapsto [a b a] [c]
    \mapsto [a b a] [c] [a b a] [a b a] [c] [a b]\\
    &\mapsto [a b a] [c] [a b a] [a b a] [c]   [a b]   [a b a] [c] [a b a] [a b a] [c]   [a b a] [c] [a b a] [a b a] [c]   [a b]   [a b a] [c] [a b a]
    \\
    &\mapsto\cdots
  \end{align*}
  Then an application of the coding $\rho$ yields $\rho(\xi^\omega([a])) = bcbbcbbcbbcbcbbcbb\cdots = \erase{a}{g^\omega(a)}$.
\end{example}

\begin{example}
  We let $\Sigma = \{\,a,b,c,d,e\,\}$ and define $g : \Sigma \to \Sigma^*$ by 
  $a \mapsto a b c d e$, $b \mapsto c c$, $c \mapsto b$, $d \mapsto c$ and $e \mapsto e a$.
  We let $\Gamma = \{\,b,e\,\}$.
  Then $g^2$ respects $\Gamma$:
  $a \mapsto a b c d e c c b c e a$, $b \mapsto b b$, $c \mapsto c c$, $d \mapsto b$ and $e \mapsto e a a b c d e$.
  Here $b$ is dead, $d$ near dead, $a$ and $c$ are resilient and $e$ is resurrecting.
  Definition~\ref{def:close:erase} yields
  \begin{align*}
    \xi([a]) &= [a] [c] [d e] [c] [c] [c e] [a] &
    \xi([b]) &= \emptyword \quad\quad\quad
    \xi([c]) = [c] [c] \quad\quad\quad
    \xi([d]) = \emptyword \\
    \xi([e]) &= [e a] [a] [c] [d e] &
    \xi([c e]) &= [c] [c] [e a] [a] [c] [d e] \\
    \xi([d e]) &= [e a] [a] [c] [d e] &
    \xi([e a]) &= [e a] [a] [c] [d e] [a] [c] [d e] [c] [c] [c e] [a]
  \end{align*}
  where $\Delta = \{\,[a],[b],[c],[d],[e],[ce],[de],[ea]\,\} $.
  Moreover, we have
  $\rho([a]) = \rho([ea]) = a$, $\rho([c]) = \rho([ce]) = c$, $\rho([d]) = \rho([de]) = d$,
  $\rho([b]) = b$, and $\rho([e]) = e$.
\end{example}

\begin{proposition}\label{prop:close:erase}
  Let $g : \Sigma^*\to \Sigma^*$ be a morphism, 
  $a\in\Sigma$ such that $g^\omega(a) \in \Sigma^\omega$,
  and $\Gamma \subseteq \Sigma$ a set of letters.
  Let $\Delta$, $\xi$ and $\rho$ be the morphic system associated to the erasure of $\Gamma$ from $g^\omega$ in  
  Definition~\ref{def:close:erase}.
  Then 
  \begin{align*}
    \rho(\xi^\omega([a])) = \erase{\Gamma}{g^\omega(a)}
  \end{align*}
\end{proposition}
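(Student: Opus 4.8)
The plan is to analyze $\xi$ through the \emph{unbracketing} map $\unblock{\cdot} : \Delta^* \to \Sigma^*$ that deletes brackets and concatenates their contents, so that $\unblock{[v_1]\cdots[v_m]} = v_1\cdots v_m$. Since $\blocks{w}$ is, by construction, a factorization of $\erase{\dead}{w}$, we have $\unblock{\blocks{w}} = \erase{\dead}{w}$, and distributing over concatenation gives the commutation identity
\begin{align*}
  \unblock{\xi(u)} = \erase{\dead}{g^r(\unblock{u})} \qquad (u \in \Delta^*).
\end{align*}
To iterate it I would first record that dead letters are stable: if $d$ is dead, then every letter of $g^r(d)$ is again dead (its $g^r$-iterates are factors of those of $d$), so $g^r(d) \in \dead^*$; consequently $\erase{\dead}{g^r(\erase{\dead}{w})} = \erase{\dead}{g^r(w)}$, because deleting the dead letters of $w$ before applying $g^r$ only removes all-dead images that $\erase{\dead}{\cdot}$ would delete anyway. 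An induction on $n$ then yields $\unblock{\xi^n([a])} = \erase{\dead}{g^{rn}(a)}$ for all $n$.

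Next I would pin down the effect of $\rho$. By construction (Definition~\ref{def:close:erase}) every block $[w_0a_1w_1]$ or $[a_iw_i]$ occurring in a word $\blocks{g^r(x)}$ contains exactly one non-$\Gamma$ letter, so for $n \ge 1$ each block of $\xi^n([a])$ has the form $[w\,b\,u]$ with $w,u \in \Gamma^*$ and $b \notin \Gamma$, and $\rho$ returns precisely this $b$. Hence $\rho(\xi^n([a]))$ lists the unique non-$\Gamma$ letters of the successive blocks, which is exactly $\erase{\Gamma}{\unblock{\xi^n([a])}}$. Combining this with the previous identity and with $\dead \subseteq \Gamma$ gives, for every $n \ge 1$,
\begin{align*}
  \rho(\xi^n([a])) = \erase{\Gamma}{\unblock{\xi^n([a])}} = \erase{\Gamma}{\erase{\dead}{g^{rn}(a)}} = \erase{\Gamma}{g^{rn}(a)}.
\end{align*}

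It remains to pass to the limit, and this is where the main care is needed. I would first show that the iterates $\xi^n([a])$ converge. Let $b_1$ be the first block of $\xi([a]) = \blocks{g^r(a)}$; since $g^r(a)$ begins with $a$ and (in the nondegenerate case) $a$ is not dead, the content of $b_1$ begins with $a$, so $\xi(b_1)$ begins with $\blocks{g^r(a)} = \xi([a])$. Thus $\xi([a])$ is a prefix of $\xi^2([a])$, and as $\xi$ preserves the prefix order, $(\xi^n([a]))_{n\ge 1}$ is an increasing chain converging to $\xi^\omega([a])$. Because $\rho$ is a coding it is continuous, and the words $\erase{\Gamma}{g^{rn}(a)}$ increase to $\erase{\Gamma}{g^\omega(a)}$; taking limits in the displayed identity yields
\begin{align*}
  \rho(\xi^\omega([a])) = \lim_{n\to\infty}\rho(\xi^n([a])) = \lim_{n\to\infty}\erase{\Gamma}{g^{rn}(a)} = \erase{\Gamma}{g^\omega(a)},
\end{align*}
as required. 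The main obstacle is exactly this convergence: the blocks emitted by $\xi$ do \emph{not} match the canonical factorization of $\erase{\dead}{g^\omega(a)}$, since $\Gamma$-runs straddling block junctions are grouped differently, so $\xi^\omega([a])$ cannot simply be read off from its unbracketing; the prolongability of $\xi$ on $b_1$ is what rescues monotonicity of the chain. Finally, the degenerate case where $a$ is dead (equivalently, $\erase{\Gamma}{g^\omega(a)}$ is finite) should be treated separately, both sides then reducing to the same finite word.
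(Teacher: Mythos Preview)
Your proof is correct and follows essentially the same route as the paper: introduce the unbracketing map $\unblock{\cdot}$, show by induction that $\unblock{\xi^n([a])}$ tracks $g^{rn}(a)$, identify $\rho$ with $\erase{\Gamma}\circ\unblock{\cdot}$ on the relevant words, and pass to the limit. You are in fact more careful than the paper in two places: you correctly carry the $\erase{\dead}$ through the commutation identity (the paper writes $\unblock{\xi^n(w)} = g^{nr}(\unblock{w})$, silently dropping the dead letters, which is harmless only because $\dead \subseteq \Gamma$), and you explicitly justify that $(\xi^n([a]))_{n\ge 1}$ is prefix-increasing so that $\xi^\omega([a])$ exists, whereas the paper simply writes ``taking limits.''
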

\begin{proof}
    For $\ell \in \nat$ and $[w_1],\ldots,[w_\ell] \in \Delta$ we define
    $\unblock{[w_1]\cdots[w_\ell]} = w_1\cdots w_\ell$.
    We prove by induction on $n$ that for all words $w \in \Delta^*$, and for all $n \in \nat$,
    $\unblock{\xi^n(w)} = g^{nr}(\unblock{w})$.
    The base case is immediate.
    For the induction step, assume that we have $n\in\nat$ such that 
    for all words $w \in \Delta^*$, $\unblock{ \xi^n(w) } = g^{nr}(\unblock{w})$.
    Let $w \in \Delta^*$, $w = [a_{1,1}\cdots a_{1,\ell_1}]\cdots [a_{k,1}\cdots a_{k,\ell_k}] $. Then 
    \begin{align*}
      \unblock{\xi(w)} 
        &= \unblock{ \xi([a_{1,1}\cdots a_{1,\ell_1}]) \; \cdots \; \xi([a_{k,1}\cdots a_{k,\ell_k}]) } \\
        &= \unblock{ \blocks{g^r(a_{1,1})} \cdots \blocks{g^r(a_{1,\ell_1})}\; \cdots \; \blocks{g^r(a_{k,1})} \cdots \blocks{g^r(a_{k,\ell_k})} }\\
        &= g^r(\unblock{w})
    \end{align*}
    By the induction hypothesis, 
    $\unblock{ \xi^{n+1}(w) }  
        = g^{nr}( \unblock{ \xi(w) } )
        = g^{nr}(g^r(\unblock{w}))
        = g^{(n+1)r}(\unblock{w})$.
    To complete the proof, note that 
    by definition $\rho([w \,a\, u]) = \erase{\Gamma}{w \,a\, u}$
    and thus
    $\rho(w) = \erase{\Gamma}{\unblock{w}}$ for every $w \in \Delta^*$.
    Hence, for all $n\geq 1$, 
    $\rho(\xi^n([a])) = \erase{\Gamma}{\unblock{ \xi^n([a]) }} = \erase{\Gamma}{ g^{nr}(a) }$.
    Taking limits:
    $\rho(\xi^\omega([a])) = \erase{\Gamma}{ g^\omega(a) }$.
\end{proof}

\begin{definition}\normalfont\label{def:close:morphism:nonerasing}
  Let $g,h : \Sigma^*\to \Sigma^*$ be morphisms such that $h$ is non-erasing.
  We construct an alphabet $\Delta$, a morphism $\xi : \Delta^*\to \Delta^*$ 
  and a coding $\rho : \Delta \to \Sigma$ as follows.
  We refer to $\Delta,\xi,\rho$ as the \emph{morphic system associated with the morphic image of $g^\omega$ under $h$}.
    
  Let $\Delta = \Sigma \cup \{\,[a] \mid a\in \Sigma\,\}$.
  For nonempty words $w = a_1 a_2 \cdots a_k \in \Sigma^*$ we define $\head{w} = a_1$ and 
  $\tail{w} = a_2 \cdots a_k$. We also define
  $\img{w} = [a_1] u_1 \; [a_2] u_2 \;\cdots\; [a_{k-1}] u_{k-1} \; [a_k] u_{k}$
  where $u_i = \tail{h(a_i)} \in \Sigma^*$.
  We define the morphism $\xi : \Delta^*\to \Delta^*$ and the coding $\rho : \Delta \to \Sigma$  by
  \begin{align*}
    \xi([a]) &= \img{g(a))} &
    \xi(a) &= \emptyword &&&&&&&
    \rho([a]) &= \head{h(a)} &  
    \rho(a) &= a
  \end{align*}
  for $a\in \Sigma$.
\end{definition}

Notice here the $\rho([a])$ and $u_i$, defined using $\head$ and $\tail$, are well-defined since $h$ is non-erasing and hence $h(a_i)$ will be nonempty.

\begin{example}  Here is an example illustrating Definition~\ref{def:close:morphism:nonerasing}.
  Let $g$ be the substitution from the Fibonacci word, $g(a) = ab$ and $g(b) = a$.
  Further, let $h$ be defined so that $h(a) = bb$ and $h(b) = a$.  
  As in  Definition~\ref{def:close:morphism:nonerasing}, let $\xi$ and $\rho$ be defined by 
  \begin{align*}
    \xi([a]) = [a] b [b] && \xi([b]) = [a]b && \xi(a) = \emptyword = \xi(b) &&&&&& \rho([a]) = b && \rho([b]) = a
  \end{align*}
  Then 
  $[a] 
    \mapsto [a]b[b] 
    \mapsto [a]b[b] [a]b 
    \mapsto [a]b[b] [a]b [a]b[b]
    \mapsto [a]b[b] [a]b [a]b[b] [a]b[b] [a]b
    \mapsto\cdots$
  are the first iterations of $\xi$ on $[a]$.
%
  The point here is that applying $\rho$ to the limit word $\xi^\omega([a])$
  is the same as $h(g^\omega(a))$:  
  \begin{align*}
    h(g^\omega(a)) 
    = h(abaababaabaababaabab\cdots) 
    = bbabbbbabbabbbbabbbbabb\cdots 
  \end{align*}
\end{example}

\begin{proposition}\label{prop:close:morphism}
  Let $g,h : \Sigma^*\to \Sigma^*$ be morphisms such that $h$ is non-erasing,
  and $a\in\Sigma$ such that $g^\omega(a) \in \Sigma^\omega$.
  Let $\Delta$, $\xi$ and $\rho$ 
 be as in Definition~\ref{def:close:erase}.
  Then
  \begin{align*}
    \rho(\xi^\omega([a])) = h(g^\omega(a))
  \end{align*}
\end{proposition}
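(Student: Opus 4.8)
The plan is to recognize the map $\img{\cdot}$ as a genuine morphism and to show that $\xi$ simulates $g$ through it. First I would observe that, reading the definition $\img{a_1\cdots a_k} = [a_1]u_1\cdots[a_k]u_k$ with $u_i=\tail{h(a_i)}$ letter by letter, the map $\img{\cdot}$ is exactly the morphism $\Sigma^*\to\Delta^*$ determined by $a\mapsto[a]\,\tail{h(a)}$; in particular $\img{vw}=\img{v}\,\img{w}$. Two facts follow at once. On the one hand, the coding $\rho$ undoes the splitting into head and tail: since $\rho([a])=\head{h(a)}$, since $\rho$ is the identity on the plain letters $\Sigma\subseteq\Delta$, and since each $u_i\in\Sigma^*$, we get $\rho(\img{w}) = \head{h(a_1)}u_1\cdots\head{h(a_k)}u_k = h(a_1)\cdots h(a_k) = h(w)$ for every $w\in\Sigma^*$. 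On the other hand, $\xi([a])=\img{g(a)}$ by definition, while $\xi$ erases every plain letter of $\Sigma$.

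The heart of the argument is the claim that $\xi^n([a])=\img{g^n(a)}$ for all $n\ge 1$, which I would prove by induction on $n$. The base case is the definition of $\xi$. For the step, write $g^n(a)=c_1\cdots c_p$, so that $\xi^n([a])=\img{g^n(a)}=[c_1]\tail{h(c_1)}\cdots[c_p]\tail{h(c_p)}$ by the induction hypothesis. Applying $\xi$ sends each $[c_i]$ to $\img{g(c_i)}$ and erases every letter of each $\tail{h(c_i)}\in\Sigma^*$; using that $\img{\cdot}$ is a morphism this yields $\xi^{n+1}([a]) = \img{g(c_1)}\cdots\img{g(c_p)} = \img{g(c_1\cdots c_p)} = \img{g^{n+1}(a)}$. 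The point is that the plain letters created as tails at one stage are inert padding: $\xi$ discards them, and the tails needed for the next stage are regenerated afresh inside each new block $\img{g(c_i)}$, while $\rho$ restores them only at the very end. Combining this with the previous paragraph gives $\rho(\xi^n([a])) = \rho(\img{g^n(a)}) = h(g^n(a))$ for every $n\ge 1$.

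It remains to pass to the limit. Since $g$ is prolongable on $a$, the word $g(a)$ begins with $a$, so $\xi([a])=\img{g(a)}$ begins with $\img{a}$ and hence with $[a]$; thus $\xi$ is prolongable on $[a]$ and $\xi^\omega([a])$ exists. Because $\img{\cdot}$ is a non-erasing morphism it preserves prefixes, so the words $\xi^n([a])=\img{g^n(a)}$ form an increasing chain of prefixes converging to $\xi^\omega([a])$. I expect the only real care to be needed in this limit step: one invokes continuity of the coding $\rho$ and of the non-erasing morphism $h$ with respect to the prefix topology to conclude $\rho(\xi^\omega([a])) = \lim_{n}\rho(\xi^n([a])) = \lim_{n}h(g^n(a)) = h(g^\omega(a))$. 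This is routine, but it is precisely the place where non-erasingness of $h$ is genuinely used: it makes $\head{h(a)}$ and $\tail{h(a)}$ well defined throughout, and it guarantees that $h(g^n(a))$ actually grows to the infinite word $h(g^\omega(a))$ rather than stalling.
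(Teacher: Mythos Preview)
Your proof is correct and close in spirit to the paper's, but organized around a different auxiliary map. You treat $\img{\cdot}$ as a morphism $\Sigma^*\to\Delta^*$ and prove the single invariant $\xi^n([a])=\img{g^n(a)}$, from which $\rho(\xi^n([a]))=h(g^n(a))$ follows via the identity $\rho\circ\img{}=h$. The paper instead introduces the projection $z:\Delta\to\Sigma^*$ given by $z([a])=a$, $z(a)=\emptyword$, and proves by a two-part induction that $z(\xi^n(w))=g^n(z(w))$ and $\rho(\xi^n(w))=h(g^n(z(w)))$ for all $w\in\Delta^*$. The two are dual: your $\img{}$ is a right section of their $z$ (that is, $z\circ\img{}=\mathrm{id}_{\Sigma^*}$), and your invariant is the restriction of theirs to words of the form $\img{v}$. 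Your route is slightly more direct, needing only one induction hypothesis and no second auxiliary map; the paper's version is marginally more general in that it applies uniformly to every $w\in\Delta^*$, not just to $[a]$. Either way the limit step and the role of non-erasingness are as you describe.
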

\begin{proof}
  We define $z : \Delta \to \Sigma^*$ by
  $z(a) = \emptyword$ and $z([a]) = a$ for all $a\in\Sigma$.
  By induction on $n > 0$ we show 
  \begin{align}
    \rho(\xi^n(w)) = h(g^n(z(w))) &&\text{and}&& z(\xi^n(w)) = g^n(z(w)) && \quad{\text{for all $w\in \Delta^*$}}
  \end{align}
  We start with the base case.
  Note that $\rho(\xi([a])) = h(g(a)) = h(g(z([a])))$
  and $\rho(\xi(a)) = \emptyword = h(g(z(a)))$ for all $a\in \Sigma$,
  and thus $\rho(\xi(w)) = h(g(z(w)))$ for all $w\in \Delta^*$.
  Moreover, we have
  $z(\xi([a])) = g(a) = g(z([a]))$
  and $z(\xi(a)) = \emptyword = g(z(a))$ for all $a\in \Sigma$,
  and hence $z(\xi(w)) = g(z(w))$ for all $w\in \Delta^*$.

  Let us consider the induction step.
  By the base case and induction hypothesis
  \begin{align*}
    \rho(\xi^{n+1}(w)) &=  \rho(\xi(\xi^{n}(w))) = h(g(z(\xi^{n}(w)))) = h(g(g^n(z(w)))) = h(g^{n+1}(z(w)))\\
    z(\xi^{n+1}(w)) &=  z(\xi(\xi^{n}(w))) = g(z(\xi^{n}(w))) = g(g^n(z(w))) = g^{n+1}(z(w))
  \end{align*}
  Thus $\rho(\xi^{n}([a])) = h(g^n(a))$ for all $n\in\nat$,
  and taking limits yields $\rho(\xi^\omega([a])) = h(g^\omega(a))$.
\end{proof}

Every morphic image of a word can be obtained by erasing letters,
followed by the application of a non-erasing morphism.
As a consequence we obtain:
\begin{corollary}\label{cor:close:pure} 
  The morphic image of a pure morphic word is morphic or finite.
\end{corollary}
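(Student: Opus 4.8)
The plan is to reduce to the two preceding propositions by factoring an arbitrary morphism as an erasure followed by a non-erasing morphism. Fix a pure morphic word $g^\omega(a) \in \Sigma^\omega$ and let $f : \Sigma \to T^*$ be the morphism whose image we wish to analyse. Set $\Gamma = \{\, c \in \Sigma \mid f(c) = \emptyword \,\}$, the set of letters erased by $f$, and let $h : \Sigma \to T^*$ agree with $f$ on $\Sigma \setminus \Gamma$ and send each letter of $\Gamma$ to an arbitrary nonempty word, so that $h$ is non-erasing. A letter-by-letter comparison then gives $f(w) = h(\erase{\Gamma}{w})$ for every $w \in \Sigma^\infty$, since the letters of $\Gamma$ contribute $\emptyword$ to $f(w)$ and are exactly those removed by the erasing map. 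In particular $f(g^\omega(a)) = h(\erase{\Gamma}{g^\omega(a)})$.

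First I would dispose of the degenerate case: if $g^\omega(a)$ contains only finitely many letters outside $\Gamma$, then $\erase{\Gamma}{g^\omega(a)}$ is a finite word, hence so is its $h$-image, and $f(g^\omega(a))$ is finite, as claimed. So assume from now on that $\erase{\Gamma}{g^\omega(a)}$ is infinite. By Proposition~\ref{prop:close:erase}, the morphic system $(\Delta,\xi,\rho)$ associated with the erasure of $\Gamma$ from $g^\omega$ satisfies $\rho(\xi^\omega([a])) = \erase{\Gamma}{g^\omega(a)}$; since the right-hand side is infinite, $\xi^\omega([a])$ is an infinite pure morphic word and $\rho$ is a coding applied to it.

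The next step is to apply $h$ and absorb the coding $\rho$. Writing $w = \xi^\omega([a])$, we have $f(g^\omega(a)) = h(\rho(w)) = (h \circ \rho)(w)$, and $h \circ \rho$ is non-erasing because $\rho$ is a coding and $h$ is non-erasing. Thus $(h\circ\rho)(w)$ is the image of the pure morphic word $w$ under a non-erasing morphism, so Proposition~\ref{prop:close:morphism} produces a morphic system whose coded fixed point equals $(h\circ\rho)(w) = f(g^\omega(a))$. Hence $f(g^\omega(a))$ is morphic, which finishes the non-degenerate case.

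The one technical point that needs care --- and the only place I expect real friction --- is that Proposition~\ref{prop:close:morphism} is stated for two endomorphisms of a single alphabet, whereas $\xi$ acts on $\Delta$ while $h\circ\rho$ maps $\Delta$ into $T^*$. I would resolve this by passing to the disjoint union $\Delta \sqcup T$ and extending both $\xi$ and $h\circ\rho$ to endomorphisms of $(\Delta\sqcup T)^*$, fixing every letter of $T$. Since $[a]\in\Delta$ and $\xi$ keeps the iteration inside $\Delta^*$, the fixed point $\xi^\omega([a])$ and the value $(h\circ\rho)(\xi^\omega([a]))$ are unaffected by these extensions, so the proposition applies verbatim and the argument goes through.
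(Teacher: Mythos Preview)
Your proof is correct and follows the same approach as the paper: factor the morphism as an erasure followed by a non-erasing morphism, then invoke Propositions~\ref{prop:close:erase} and~\ref{prop:close:morphism} in turn. The paper's version is terser and glosses over both the finite case and the alphabet-mismatch issue you raise, but the underlying argument is identical.
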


\begin{proof}
  Let $w \in \Sigma^\omega$ be a word and $h : \Sigma \to \Sigma^*$ a morphism.
  Let $\Gamma = \{\;a \mid h(a) = \emptyword\;\}$ be the set of letters erased by $h$,
  and $\Delta = \Sigma\setminus\Gamma$.
  Then $h(w) = g(\erase{\Gamma}{w})$
  where $g$ is the non-erasing morphism obtained by restricting $h$ to $\Delta$.
  Hence for purely morphic $w$, the result follows from Propositions~\ref{prop:close:erase} and~\ref{prop:close:morphism}.
\end{proof}

\begin{theorem}  [Cobham~\cite{cobh:68}, Pansiot~\cite{pans:83}]
  The morphic image of a morphic word is morphic.
\end{theorem}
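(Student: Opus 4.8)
The plan is to reduce the general morphic case to the \emph{pure} morphic case that has already been settled in Corollary~\ref{cor:close:pure}, by absorbing the coding that appears in the definition of a morphic word into the morphism we are applying. The point is that a coding composed with a morphism is again a morphism, so the two-stage object $f(c(\cdots))$ collapses into a single morphic image of a pure morphic word.

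Concretely, I would first unpack the hypotheses. By Definition, a morphic word $w$ has the form $w = c(g^\omega(s))$ for some morphism $g : \Sigma \to \Sigma^*$, a starting word $s$ with $g^\omega(s)$ infinite, and a coding $c : \Sigma \to \Gamma$. We are given a further morphism $f : \Gamma \to \Delta^*$ and must show that $f(w)$ is morphic. The key step is the observation that $f \circ c : \Sigma \to \Delta^*$ is itself a morphism, and that composition commutes with taking the image of the infinite word $g^\omega(s)$: writing $g^\omega(s) = x_0 x_1 x_2 \cdots$, we have $f(c(g^\omega(s))) = f(c(x_0))\,f(c(x_1)) \cdots = (f \circ c)(x_0)\,(f\circ c)(x_1) \cdots = (f \circ c)(g^\omega(s))$, where the middle equality uses that $c$ is $1$-uniform, so that no letter boundaries are merged and the letterwise extension of $f \circ c$ agrees with applying $f$ and $c$ in turn. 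Hence $f(w) = (f \circ c)(g^\omega(s))$ is exactly the morphic image of the pure morphic word $g^\omega(s)$ under the single morphism $f \circ c$.

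Finally I would invoke Corollary~\ref{cor:close:pure}, which asserts that the morphic image of a pure morphic word is morphic or finite. Since morphic words are infinite by definition, in the case of interest $f(w)$ is infinite, and we conclude that $f(w)$ is morphic. I do not expect a genuine obstacle here: the substantive content has already been carried out in Propositions~\ref{prop:close:erase} and~\ref{prop:close:morphism} and packaged into Corollary~\ref{cor:close:pure}, and the only thing left to verify is the routine bookkeeping that composing a morphic image with a coding stays inside the class of morphisms and commutes with forming the infinite image.
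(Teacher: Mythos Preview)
Your proposal is correct and follows essentially the same approach as the paper: the paper's entire proof reads ``Follows from Corollary~\ref{cor:close:pure} since the coding can be absorbed into the morphic image,'' and your argument spells out exactly this absorption $f\circ c$ and invokes the same corollary. Your treatment of the ``or finite'' caveat is slightly more explicit than the paper's, which simply leaves it implicit.
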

\begin{proof}
  Follows from Corollary~\ref{cor:close:pure}
  since the coding can be absorbed into the morphic image. 
\end{proof}

\subsection*{Eigenvalue analysis}

The following lemma states that if a square matrix $N$ is an extension of a square matrix $M$,
and all added columns contain only zeros, then $M$ and $N$ have the same non-zero eigenvalues.

\begin{center}
 $\left(
  \begin{array}{cccc}
    \begin{array}{ccccc|}
       &      & &  & \\[-1.5ex]
  &     M & &  &\\  [-2ex]
  &      & &  & \\
      \hline      
    \end{array}
  & 0 & \cdots & 0  \\
  & 0& \cdots & 0 \\[-2ex]
  \\   
  & 0& \cdots & 0 \\ 
  \end{array}
\right)  $
\end{center}
\smallskip

\begin{lemma}\label{lem:matrix:zero}
  Let $\Sigma$, $\Delta$ be disjoint, finite alphabets.
  Let $M = (m_{i,j})_{i,j\in \Sigma}$ and $N = (n_{i,j})_{i,j\in \Sigma\cup \Delta}$
  be matrices such that
  (i) $n_{i,j} = m_{i,j}$ for all $i,j\in \Sigma$ and 
  (ii) $n_{i,j} = 0$ for all $i \in \Sigma\cup \Delta$, $j \in \Delta$.
  Then $M$ and $N$ have the same non-zero eigenvalues.
\end{lemma}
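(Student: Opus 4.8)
The plan is to exploit the block-triangular shape that conditions (i) and (ii) force on $N$, and to read off the two characteristic polynomials directly. First I would fix an ordering of the combined index set $\Sigma \cup \Delta$ that lists the elements of $\Sigma$ before those of $\Delta$. With respect to this ordering, condition~(i) says that the top-left $\Sigma \times \Sigma$ block of $N$ is exactly $M$, while condition~(ii) says that every column indexed by $\Delta$ vanishes; hence both the top-right and the bottom-right blocks of $N$ are zero. Writing $B$ for the bottom-left $\Delta \times \Sigma$ block (which conditions (i)--(ii) leave completely unconstrained), we have
$$N = \begin{pmatrix} M & 0 \\ B & 0 \end{pmatrix}.$$

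Next I would compute the characteristic polynomial of $N$. Forming $\lambda I - N$, which inherits the same shape,
$$\lambda I - N = \begin{pmatrix} \lambda I_\Sigma - M & 0 \\ -B & \lambda I_\Delta \end{pmatrix},$$
I observe that, because the top-right block is zero, this matrix is block lower-triangular. The standard determinant identity for block-triangular matrices then gives
$$\det(\lambda I - N) = \det(\lambda I_\Sigma - M)\cdot \det(\lambda I_\Delta) = \lambda^{|\Delta|}\cdot \det(\lambda I_\Sigma - M).$$
Thus the characteristic polynomial of $N$ is simply that of $M$ multiplied by the factor $\lambda^{|\Delta|}$.

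Finally I would read off the conclusion. The multiset of roots of $\det(\lambda I - N)$ is the multiset of roots of $\det(\lambda I_\Sigma - M)$ together with $|\Delta|$ additional copies of $0$; discarding the roots equal to $0$ on each side leaves identical multisets, so $M$ and $N$ have exactly the same non-zero eigenvalues, and in fact with the same algebraic multiplicities. I do not expect a genuine obstacle here. The one point that needs care is that conditions (i)--(ii) do \emph{not} force the block $B$ to vanish, so the argument must rest on block lower-triangularity rather than on a naive direct-sum (block-diagonal) decomposition; and I should state explicitly that the extra factor $\lambda^{|\Delta|}$ contributes only the eigenvalue $0$, which is precisely what is filtered out when we restrict attention to non-zero eigenvalues.
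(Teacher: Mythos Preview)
Your proof is correct and follows essentially the same approach as the paper: both recognize that $N$ is block lower-triangular with diagonal blocks $M$ and $0$, so its eigenvalues are those of $M$ together with those of the zero block. Your version is simply more explicit about the characteristic-polynomial computation and the role of the unconstrained block $B$.
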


\begin{proof}
$N$ is a block lower triangular matrix with $M$ and $0$ as the matrices on the diagonal.
Hence the eigenvalues of $N$ are the combined eigenvalues of $M$ and $0$. Therefore $M$ and $N$
have the same non-zero eigenvalues.
\end{proof}

We now show that morphic images with respect to non-erasing morphisms preserve $\alpha$-substitutivity.
This strengthens a result obtained in~\cite{dura:2011}
where it has been shown that the non-erasing morphic image of an $\alpha$-substitutive sequence
is $\alpha^k$-substitutive for some $k\in \nat$.
We show that one can always take $k = 1$.
Note that every $\alpha$-substitutive sequence is also $\alpha^k$-substitutive for all $k\in\nat, k > 0$.

\begin{theorem}\label{thm:morphism}
  Let $\aalph$ be a finite alphabet, $w \in \aalph^\omega$ be an $\alpha$-substitutive sequence and $h : \aalph \to \aalph^*$ a non-erasing morphism.
  Then the morphic image of $w$ under $h$, that is $h(w)$, is $\alpha$-substitutive.
\end{theorem}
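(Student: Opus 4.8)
The plan is to reduce to the purely morphic case, apply the non-erasing morphic-image construction of Definition~\ref{def:close:morphism:nonerasing}, and control the dominating eigenvalue through Lemma~\ref{lem:matrix:zero}. Since $w$ is $\alpha$-substitutive, I fix a morphism $g:\Sigma\to\Sigma^*$ with dominating eigenvalue $\alpha$, a coding $c:\Sigma\to\Sigma$, and a letter $a$ with $w=c(g^\omega(a))$ such that every letter of $\Sigma$ occurs in $g^\omega(a)$. Set $h'=h\circ c$; this is again a morphism $\Sigma\to\Sigma^*$, and it is non-erasing because $c$ is $1$-uniform and $h$ is non-erasing. Then $h(w)=h(c(g^\omega(a)))=h'(g^\omega(a))$, so it suffices to analyse the morphic image of the \emph{pure} morphic word $g^\omega(a)$ under the non-erasing morphism $h'$.

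Applying Definition~\ref{def:close:morphism:nonerasing} to $g$ and $h'$ yields $\Delta=\Sigma\cup\{[b]\mid b\in\Sigma\}$, a morphism $\xi:\Delta^*\to\Delta^*$, and a coding $\rho:\Delta\to\Sigma$, and by Proposition~\ref{prop:close:morphism} we have $\rho(\xi^\omega([a]))=h'(g^\omega(a))=h(w)$. Next I would read off the incidence matrix $\smatrix{\xi}$, ordering $\Delta$ with the bracketed letters first and the plain letters of $\Sigma$ last. Since $\xi(c)=\emptyword$ for every plain $c\in\Sigma$, all columns indexed by $\Sigma$ vanish; and since $\xi([b])=\img{g(b)}$ contains $[c]$ exactly $\occ{c}{g(b)}$ times, the block of $\smatrix{\xi}$ on the bracketed letters is precisely $\smatrix{g}$. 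Hence $\smatrix{\xi}$ has exactly the block shape of Lemma~\ref{lem:matrix:zero} with $M=\smatrix{g}$, so $\smatrix{\xi}$ and $\smatrix{g}$ share the same non-zero eigenvalues. As $g^\omega(a)$ is infinite, $\smatrix{g}$ is not nilpotent, so $\alpha>0$, and therefore $\alpha$ is the dominating eigenvalue of $\smatrix{\xi}$ as well.

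It remains to secure condition~(ii) of Definition~\ref{definition-alpha-substitutive}: not every letter of $\Delta$ need occur in $\xi^\omega([a])$. I would therefore restrict $\xi$ and $\rho$ to the sub-alphabet $\Delta'\subseteq\Delta$ of letters actually occurring in $\xi^\omega([a])$. This is legitimate because $g$ is prolongable on $a$, so $g(a)=az$ forces $\xi([a])=\img{g(a)}$ to begin with $[a]$; thus $\xi$ is prolongable on $[a]$ and $\xi^\omega([a])$ is a fixed point of $\xi$, whence $\xi$ maps $\Delta'$ into $(\Delta')^*$. The step I expect to be the main obstacle is ensuring that this restriction does not disturb the dominating eigenvalue, and here condition~(ii) for $w$ is decisive. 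Using the auxiliary map $z$ from the proof of Proposition~\ref{prop:close:morphism} (where $z(\xi^n([a]))=g^n(a)$), the bracketed letters of $\xi^n([a])$ read off $g^n(a)$, so $[b]$ occurs in $\xi^\omega([a])$ if and only if $b$ occurs in $g^\omega(a)$ --- which holds for \emph{every} $b\in\Sigma$. Consequently $\Delta'$ retains all bracketed letters, the $\smatrix{g}$-block of $\smatrix{\xi'}$ survives in full, and the plain letters remaining in $\Delta'$ again contribute only zero columns; a second application of Lemma~\ref{lem:matrix:zero} (again with $\alpha>0$) shows that $\smatrix{\xi'}$ has dominating eigenvalue $\alpha$. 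Finally $\rho'=\rho|_{\Delta'}$ is a coding with $\rho'(\xi'^\omega([a]))=h(w)$ and every letter of $\Delta'$ occurring in $\xi'^\omega([a])$, so $h(w)$ is $\alpha$-substitutive, the coding being taken onto the alphabet of $h(w)$.
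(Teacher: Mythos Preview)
Your argument is correct and follows essentially the same route as the paper: absorb the coding into $h$, apply Definition~\ref{def:close:morphism:nonerasing} and Proposition~\ref{prop:close:morphism}, and then use Lemma~\ref{lem:matrix:zero} on the block structure of $\smatrix{\xi}$ to identify the dominating eigenvalue with $\alpha$. Your additional step of restricting to the sub-alphabet $\Delta'$ of letters actually occurring in $\xi^\omega([a])$, and checking via the map $z$ that all bracketed letters survive so that the $\smatrix{g}$-block is preserved, makes condition~(ii) of Definition~\ref{definition-alpha-substitutive} explicit; the paper's own proof stops at the eigenvalue computation and leaves this verification to the reader.
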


\begin{proof}
  Let $\aalph = \{\,a_1,\ldots,a_k\,\}$ be a finite alphabet, $w \in \aalph^\omega$ be an $\alpha$-substitutive sequence and $h : \aalph \to \aalph^*$ a non-erasing morphism.
  As the sequence $w$ is $\alpha$-substitutive, there exist
  a morphism $g : \aalph \to \aalph^*$ with dominant eigenvalue~$\alpha$,
  a coding $c : \aalph \to \aalph$
  and a letter $a \in \aalph$
  such that 
  $w = c(g^\omega(a))$ and all letters from $\aalph$ occur in $g^\omega(a)$.
  Then $h(w) = h(c(g^\omega(a))) = (h \circ c)(g^\omega(a)))$,
  and $h \circ c$ is a non-erasing morphism.
  Without loss of generality, by absorbing $c$ into $h$, we may assume that $c$ is the identity.
  
  From $h$ and $g$, we obtain 
  an alphabet $\Delta$,  a morphism $\xi$,  and a coding $\rho$ as in Definition~\ref{def:close:morphism:nonerasing}.
  Then by Proposition~\ref{prop:close:morphism},
  we have $\rho(\xi^\omega([a])) = h(g^\omega(a))$.
  As a consequence, it suffices to show that $\rho(\xi^\omega([a]))$ is $\alpha$-substitutive.
  Let $M = (M_{i,j})_{i,j \in \Sigma}$ and $N = (N_{i,j})_{i,j \in \Delta}$ 
  be the incidence matrices of $g$ and $\xi$, respectively.
  By Definition~\ref{def:close:morphism:nonerasing} we have for all $a,b\in\aalph$:
  $\occ{[b]}{\xi([a])} = \occ{b}{g(a)}$ and
  $\occ{b}{\xi(a)} = \occ{[b]}{\xi(a)} = 0$.
  Hence we obtain
  $N_{[b],[a]} = M_{b,a}$, 
  $N_{b,a} = 0$
  and
  $N_{[b],a} = 0$ 
  for all $a,b\in\aalph$.
  After changing the names (swapping $a$ with $[a]$) in $N$,
  we obtain from Lemma~\ref{lem:matrix:zero}
  that $N$ and $M$ have the same non-zero eigenvalues,
  and thus the same dominant eigenvalue.
\end{proof}

\begin{example}
  Let $F$ be the Fibonacci word (generated by the morphism $a \mapsto ab$ and $b \mapsto a$)
  and let $T$ be the \thuemorse{} sequence.
  We show that there exist no non-erasing morphisms $g,h$
  such that $g(F) = h(T)$ and this image is not ultimately periodic.
  Let $g$ and $h$ be non-erasing morphisms.
  The Fibonacci word is $\phi$-substitutive where $\phi = (1+\sqrt{5})/2$ is the golden ratio,
  and the Thue-Morse sequence is $2$-substitutive.
  By Theorem~\ref{thm:morphism}, $g(F)$ is $\phi$-substitutive and $h(T)$ is $2$-substitutive.
  Note that $\phi$ and $2$ are multiplicatively independent:
  using induction on $k \in \nat_{>0}$
  it follows that every $\phi^k$ is of the form $a + b\sqrt{5}$ for rational numbers $a,b > 0$.
  It follows by Theorem~\ref{theorem-Durand} that $g(F) = h(T)$ implies that this word is ultimately periodic.
\end{example}

\begin{remark}\label{rem:erasing:morphism}
  The restriction to non-erasing morphisms in Theorem~\ref{thm:morphism} is important
  since every morphic sequence can be obtained by erasure of letters from a $2$-substitutive sequence.

  Nevertheless, we can use the above theorem to reason about morphic images with respect to erasing morphisms as follows.
  Let $w \in  \aalph^\omega$, and $g : \aalph \to \aalph^*$ a morphism.
  Let $\balph$ be the letters erased by $g$, and let $h$ be the restriction of $g$ to $\aalph \setminus \balph$.
  Then $h$ is non-erasing and $g(w) = h(\erase{\balph}{w})$.
  Hence, if $\erase{\balph}{w}$ is $\alpha$-substitutive, then so is $g(w)$ by Theorem~\ref{thm:morphism}.
  As a consequence, it suffices to determine $\alpha$-substitutivity of all sequences $\erase{\balph}{w}$ with $\balph \subseteq \aalph$
  (using Definition~\ref{def:close:erase} and Proposition~\ref{prop:close:erase}).
\end{remark}

\section{Closure of Morphic Sequences under Transduction}
\label{section-closure-transduction}

\def\d{\delta}
\def\e{\epsilon}

\label{section-Dekking}
In this section, we give a proof of  the following theorem due to Dekking~\cite{dekk:94}.

\begin{theorem}[Transducts of morphic sequences are morphic]\label{the:transducts_preserve_morphic}
  If $M = (\Sigma, \Delta, Q, q_0, \delta, \lambda)$ is a transducer with input alphabet $\Sigma$ 
  and $x \in \Sigma^\omega$ is a morphic sequence, then $M(x)$ is morphic or finite.
\end{theorem}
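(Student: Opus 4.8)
The plan is to reduce $x$ to a pure morphic sequence, annotate the generating morphism with the states of $M$ so that $M(x)$ appears as a (possibly erasing) morphic image of a pure morphic word, and then invoke Corollary~\ref{cor:close:pure}.

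First I would absorb the coding into the transducer. Every morphic sequence can be written $x = c(h^\omega(a))$ with $c$ a coding and $h$ prolongable on $a$, and $c$ is computed by a one-state transducer $C$. By Proposition~\ref{prop-basic} the composite $M \circ C$ is a transducer with $(M\circ C)(h^\omega(a)) = M(c(h^\omega(a))) = M(x)$, so I may assume $x = y := h^\omega(a)$, a fixed point $h(y)=y$, after replacing $M$ by $M\circ C$.

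The core of the argument records, along $y$, the state of $M$ entering each block of a fixed-point decomposition. For $u \in \Sigma^*$ write $\delta(\cdot, u) : Q \to Q$ for the extended transition map and $\lambda(q,u)\in\Delta^*$ for the extended output word. For any morphism $H$ with $H(y)=y$, decomposing $y = H(y_1)H(y_2)\cdots$ gives entry states $e_1 = q_0$ and $e_{j+1} = \delta(e_j, H(y_j))$, and by the concatenation property of $\lambda$ one has $M(y) = \lambda(e_1,H(y_1))\,\lambda(e_2,H(y_2))\cdots$. I want the sequence $A := (y_1,e_1)(y_2,e_2)\cdots$ over $\Sigma\times Q$ to be the fixed point of the morphism $\Phi$ sending $(b,e)$ to the letters of $H(b)$, each tagged with the state entering it as computed from $e$. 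The sole obstacle is that $\Phi(A)=A$ requires the state entering the sub-block spawned by a letter to match the state already assigned to that letter; for $H=h$ this fails, because expanding one level deeper recomputes the entering state through $\delta(\cdot, h^2(\cdot))$ rather than $\delta(\cdot, h(\cdot))$, so the states at block boundaries disagree. This boundary consistency is the one genuinely delicate point.

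I would remove the obstacle by taking $H$ to be a power of $h$ that stabilizes the transition maps. Since $Q^Q$ is finite, the maps $b \mapsto \delta(\cdot, h^m(b))$ are eventually periodic in $m$, so there is an $n \ge 1$ with $\delta(q, h^{2n}(b)) = \delta(q, h^n(b))$ for all $q\in Q$ and $b\in\Sigma$; composing these per letter gives $\delta(q, H^2(u)) = \delta(q, H(u))$ for $H := h^n$ and every $u\in\Sigma^*$. As $y = H^\omega(a)$ and $H(y)=y$, this identity is exactly the boundary consistency needed, so verifying $\Phi(A)=A$ becomes a direct computation; since $\Phi$ is prolongable on $(a,q_0)$, the word $A = \Phi^\omega((a,q_0))$ is pure morphic. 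Finally the morphism $\Lambda(b,e) := \lambda(e, H(b))$ satisfies $\Lambda(A) = M(y) = M(x)$, exhibiting $M(x)$ as a morphic image of a pure morphic word. Because $\Lambda$ may be erasing, Corollary~\ref{cor:close:pure} applies and shows $M(x)$ is morphic or finite; everything besides the choice of $H$ is routine.
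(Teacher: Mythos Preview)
Your argument is correct. Both your proof and the paper's absorb the coding into the transducer and exploit the eventual periodicity of the maps $\tau_{h^m(b)}\in Q^Q$, but you execute the annotation differently. The paper keeps the original morphism $h$ and annotates each letter with the entire tuple $\Theta(w)=(\tau_w,\tau_{h(w)},\ldots,\tau_{h^{n+p-1}(w)})\in(Q^Q)^{n+p}$, so that the annotated morphism $\overline{h}$ still has exactly the shape of $h$; a further coding $(\sigma,\Theta(w))\mapsto(\sigma,\tau_w(q_0))$ then yields the state-annotated sequence $z$, to which the output morphism $\lambda$ is applied. You instead pass to a power $H=h^n$ chosen so that $\tau_{H^2(b)}=\tau_{H(b)}$ for every letter $b$, which lets you annotate with a single state in $Q$ and makes the fixed-point verification $\Phi(A)=A$ a one-line boundary check.

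Your route is more economical for Theorem~\ref{the:transducts_preserve_morphic} itself: the alphabet $\Sigma\times Q$ is much smaller than $\Sigma\times A$, and you avoid the intermediate coding. The paper's choice, however, is deliberate for what comes next. Because $\overline{h}$ is an annotation of $h$ in the technical sense of Section~4.2, Corollary~\ref{corollary:dominant} gives that $h$ and $\overline{h}$ share the same dominant eigenvalue $\alpha$, which is exactly what is needed to show that non-erasing transducts preserve $\alpha$-substitutivity on the nose. Your $\Phi$ is an annotation of $H=h^n$, so the same machinery would only yield $\alpha^n$-substitutivity---enough for the Cobham-style application in Theorem~\ref{theorem-apply-Durand}, but weaker than the paper's stated strengthening of Durand's result.
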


This proof will proceed by \emph{annotating} entries in the original sequence $x$ 
with information about what state the transducer is in upon reaching that entry.
This allows us to construct a new morphism which produces the transduced sequence $M(x)$ as output. 
After proving this theorem, we will show that this process of annotation preserves $\alpha$-substitutivity.

\begin{figure}[h!]
  \centering
  \includegraphics{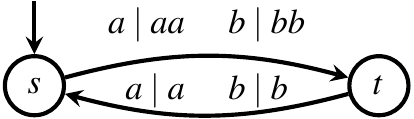}
  \caption{A transducer that doubles every other letter.}
  \label{fig:double}
\end{figure}

\begin{example}\label{example-transducer}
  To illustrate several points in this section,
  we will consider the Fibonacci morphism ($h(a) = ab$, $h(b) = a$) 
  and the transducer which doubles every other letter, shown in Figure~\ref{fig:double}. 
\end{example}

\subsection{Transducts of morphic sequences are morphic}


We show in Lemma~\ref{lem:state_annotated_is_morphic} that transducts of morphic sequences
are morphic.
In order to prove this, we  also need several lemmas about transducers which are of independent interest. 
The approach here is adapted from a result in Allouche and Shallit~\cite{allo:shal:2003}; it is  attributed in that book  to Dekking. 
We repeat it here partly for the convenience of the reader, but mostly because there are some details of the proof which 
are used in the analysis of the substitutivity property.

\begin{definition}[$\tau_w$, $\Xi(w)$]
  Given a transducer $M = (\Sigma, \Delta, Q, q_0, \delta, \lambda)$ and a word $w \in \Sigma^*$,
  we define $\tau_w \in Q^Q$ to be $\tau_w(q) = \d(q, w)$. Note that $\tau_{wv} = \tau_v \circ \tau_w$.
  Further, we define $\Xi: \Sigma^*\to  (Q^Q)^\omega$ by
  $\Xi(w) = (\tau_w, \tau_{h(w)}, \tau_{h^2(w)}, \ldots, \tau_{h^n(w)}, \ldots)$.
  \label{definition-tau}
\end{definition}

\begin{example}\label{ann_ex:tau_and_theta'}
  Recall the transducer $M$ from Figure~\ref{fig:double}. 
  Let $id: Q \to Q$ be the identity, 
  and let $\nu: Q \to Q$ be the transposition $\nu(s) = t$ and $\nu(t) = s$. 
  For this transducer, 
  $\tau_w = id$ if  $|w|$ is even and $\tau_w = \nu$ if  $|w|$ is odd.
  We have $\Xi(a) = (\tau_a, \tau_{ab}, \tau_{aba}, \tau_{abaab}, \tau_{abaababa}, \ldots)$.
  In this notation,
  \begin{align*}
    \Xi(a) & = (\nu, id, \nu, \nu, id, \nu, \nu, id, \nu, \nu, \ldots) &
    \Xi(b) & =  (\nu, \nu, id, \nu, \nu, id, \ldots) &
    \Xi(\e) & =  (id, id, id, id, \ldots)
  \end{align*}
\end{example}

Next, we show that $\set{\,\Xi(w) : w \in \Sigma^*\,}$ is finite.

\begin{lemma}\label{lem:finite_annotation_exists}
  For any  transducer $M$ and any morphism $h:\Sigma\to\Sigma^*$,  
  there are natural numbers $p \geq 1$ and $n \geq 0$ so that for all $w \in \Sigma^*$, 
  $\tau_{h^i(w)} = \tau_{h^{i+p}(w)}$ for all $i \geq n$.
\end{lemma}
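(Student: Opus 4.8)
The plan is to exhibit the sequence $(\tau_{h^i(w)})_{i\ge 0}$ as the orbit of a single deterministic map on a fixed finite set, so that eventual periodicity follows by pigeonhole, and to arrange the argument so that the period $p$ and the threshold $n$ do not depend on $w$.

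First I would record the relevant monoid structure. The set $Q^Q$ of self-maps of the finite state set $Q$ is finite, and under the product $f \cdot g := g \circ f$ it is a monoid with unit the identity map $\mathrm{id}$. Since $\tau_{uv} = \tau_v \circ \tau_u = \tau_u \cdot \tau_v$ and $\tau_{\emptyword} = \mathrm{id}$, the assignment $\tau : \Sigma^* \to (Q^Q,\cdot)$ is a monoid homomorphism. Consequently, for each $i \ge 0$ the composite $f_i(w) := \tau_{h^i(w)}$ is again a monoid homomorphism $\Sigma^* \to (Q^Q,\cdot)$, using $h^i(uv) = h^i(u)\,h^i(v)$. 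A homomorphism out of the free monoid $\Sigma^*$ is determined by its values on the generators, so $f_i$ is completely determined by the tuple $g_i := (\tau_{h^i(a)})_{a \in \Sigma}$, which lies in the \emph{finite} set $(Q^Q)^{\Sigma}$.

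The key step is to show that $g_{i+1}$ is a function of $g_i$ alone. Writing $h(a) = c_1 \cdots c_r$ with each $c_j \in \Sigma$, we have $h^{i+1}(a) = h^i(c_1)\cdots h^i(c_r)$, so
\[
  \tau_{h^{i+1}(a)} = \tau_{h^i(c_1)} \cdot \tau_{h^i(c_2)} \cdots \tau_{h^i(c_r)} = g_i(c_1)\, g_i(c_2) \cdots g_i(c_r),
\]
an expression involving only the entries of $g_i$ together with the fixed morphism $h$. Hence there is a single map $G : (Q^Q)^{\Sigma} \to (Q^Q)^{\Sigma}$, independent of both $w$ and $i$, with $g_{i+1} = G(g_i)$: the value $G(g)(a)$ is the product in $(Q^Q,\cdot)$ of $g(c_1),\ldots,g(c_r)$ where $h(a) = c_1\cdots c_r$, the empty product being $\mathrm{id}$ when $h(a) = \emptyword$.

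Finally I would apply pigeonhole to the orbit $g_0, g_1 = G(g_0), g_2 = G^2(g_0), \ldots$, which takes values in a finite set: two of its terms coincide, and since the dynamics are deterministic the orbit is eventually periodic, giving $n \ge 0$ and $p \ge 1$ with $g_{i+p} = g_i$ for all $i \ge n$. As $g_i$ determines the homomorphism $f_i$ entirely, $g_{i+p} = g_i$ forces $f_{i+p} = f_i$, that is $\tau_{h^{i+p}(w)} = \tau_{h^i(w)}$ for \emph{every} $w \in \Sigma^*$ and every $i \ge n$, which is exactly the claim with $n$ and $p$ uniform in $w$. The main obstacle is the well-definedness of the transition map $G$: this would fail if $\tau_{h^i(w)}$ depended on more than the letter-values recorded in $g_i$, and what rescues the argument is precisely that each $f_i$ is a homomorphism on the free monoid and is therefore pinned down by its restriction to $\Sigma$.
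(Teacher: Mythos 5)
Your proof is correct and follows essentially the same route as the paper: your map $G$ on $(Q^Q)^{\Sigma}$ is exactly the paper's map $H$, your tuple $g_i$ is the paper's $T_i$, and the final reduction from letters to arbitrary words via multiplicativity of $\tau$ is the same. The only difference is presentational --- you package the argument in the language of monoid homomorphisms out of the free monoid, which makes the uniformity in $w$ slightly more transparent but adds no new content.
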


\begin{proof}
  Let $\Sigma = \{1, 2, \ldots, s\}$. Define $H: (Q^Q)^{s} \to (Q^Q)^{s}$ by 
  $H(f_1, f_2, \ldots, f_{s}) =  (f_{h(1)}, f_{h(2)}, \ldots, f_{h(s)})$.
  When we write $f_{h(i)}$ on the right, here is what we mean.  Suppose that $h(i) = v_0\cdots v_j$.
  Then $f_{h(i)}$ is short for the composition $f_{v_j} \circ f_{v_{j-1}} \circ \cdots \circ f_{v_1} \circ f_{v_0}$.
  Recall the notation $\tau_w$ from Definition~\ref{definition-tau}; we thus have $\tau_i$ for  the individual letters $i\in\Sigma$.
  Consider $T_0 = (\tau_1, \tau_2, \ldots, \tau_{s})$.   We define its \emph{orbit} as 
  the infinite sequence $(T_i)_{i\in\omega}$ of elements of $ (Q^Q)^{s}$ given by
  $T_i =  H^i(T_0) =  H^i(\tau_1, \ldots \tau_{s}) =  (\tau_{h^i(1)}, \ldots, \tau_{h^i(s)})$.
  Since each of the $T_i$ belongs to the finite set $(Q^Q)^{s}$,  the orbit of $T_0$ is eventually periodic. 
  Let $n$ be the preperiod length and $p$ be the period length. The
  periodicity implies that $(*)$ $\tau_{h^i(j)} = \tau_{h^{i+p}(j)}$ for each $j \in \Sigma$ and for all $i \geq n$.

  Let $w \in \Sigma^*$ and $i \geq n$. Since $w \in \Sigma^*$, we can write it as $w = \sigma_1\sigma_2 \cdots \sigma_m$.
  We prove that 
  $\tau_{h^i(w)} = \tau_{h^{i+p}(w)}$. 
  Note that
  $\tau_{h^i(w)} =  \tau_{h^i(\sigma_1 \cdots \sigma_m)} =  \tau_{h^i(\sigma_1)h^i(\sigma_2) \cdots h^i(\sigma_m)}
  =  \tau_{h^i(\sigma_n)}\circ\cdots\circ\tau_{h^i(\sigma_1)}$.
  We got this  by breaking $w$ into individual letters, then using the fact that $h$ is a morphism, and finally using 
  the fact that $\tau_{u v} = \tau_u\circ\tau_v$.
  Finally we know by $(*)$ that for individual letters,
  $\tau_{h^i(\sigma_j)} = \tau_{h^{i+p}(\sigma_j)}$.
  So  $\tau_{h^i(w)} = \tau_{h^{i+p}(w)}$, as desired.
\end{proof}

\begin{definition}[$\Theta(w)$]
  Given a  transducer $M$ and a morphism $h$, we find $p$ and $n$ as in Lemma~\ref{lem:finite_annotation_exists}
  just above and define $\Theta(w) = (\tau_w, \tau_{h(w)}, \ldots, \tau_{h^{n+p-1}(w)})$.
\label{def-Theta}
\end{definition}

\begin{example}  We continue with Example~\ref{example-transducer}.
  As the proof in Lemma~\ref{lem:finite_annotation_exists} demonstrates, to find the $p$ and $n$ for our transducer and the
   Fibonacci morphism, we only need to  find the common period of  $\Xi(a)$ and $\Xi(b)$.
Using what we saw in Example~\ref{ann_ex:tau_and_theta'} above,  we can  take $n = 0$ and $p = 3$. 
  Therefore, $\Theta(a) = (\nu, id, \nu)$ and $\Theta(b) = (\nu, \nu, id)$. 
  We also note that $\Theta(\e) = (id, id, id)$ and $\Theta(ab) = (id, \nu, \nu)$, as we will need these later.
\end{example}

\begin{lemma}
\begin{enumerate}[(i)]
\item Given $M$ and $h$, the set $A = \{\, \Theta(w) : w \in \Sigma^*\,\}$
  is finite.
  \item 
 If   $\Theta(w) = \Theta(y)$, then $\Theta(h(w)) = \Theta(h(y))$.
\item  If   $\Theta(w) = \Theta(y)$, then for all $u\in\Sigma^*$,  $\Theta(wu) = \Theta(y u)$.
\end{enumerate}
\label{lemma-for-welldefinedness}
\end{lemma}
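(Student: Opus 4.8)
The plan is to prove the three claims in Lemma~\ref{lemma-for-welldefinedness} by carefully unpacking the definition of $\Theta$ and exploiting the fact that $\tau$ is a homomorphism from words (under concatenation) to functions $Q^Q$ (under composition), together with the periodicity established in Lemma~\ref{lem:finite_annotation_exists}. The central observation I would isolate first is that $\Theta(w)$ records only finitely many of the functions $\tau_{h^i(w)}$, namely $i = 0, 1, \ldots, n+p-1$, but by the periodicity $(*)$ these determine \emph{all} of the $\tau_{h^i(w)}$ for $i \geq 0$: indeed for $i \geq n$ we have $\tau_{h^i(w)} = \tau_{h^{n + ((i-n) \bmod p)}(w)}$, and this latter index lies in the recorded range. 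So knowing $\Theta(w)$ is equivalent to knowing the entire sequence $\Xi(w)$.

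\medskip

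\textbf{Claim (i)} is immediate: $\Theta(w)$ is an element of the set $(Q^Q)^{n+p}$, which is finite because $Q$ is finite and $n, p$ are fixed. Hence $A = \{\Theta(w) : w \in \Sigma^*\}$, being a subset of a finite set, is finite.

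\medskip

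\textbf{For Claim (ii),} I would argue that $\Theta(h(w))$ is in fact computable from $\Theta(w)$ — it is a ``shift'' of the recorded sequence. By definition $\Theta(h(w)) = (\tau_{h(w)}, \tau_{h^2(w)}, \ldots, \tau_{h^{n+p}(w)})$, i.e.\ the coordinates are $\tau_{h^{i+1}(w)}$ for $i = 0, \ldots, n+p-1$. Every entry with index $i+1 \leq n+p-1$ already appears in $\Theta(w)$; the single new entry $\tau_{h^{n+p}(w)}$ is recovered from $\Theta(w)$ via the periodicity $\tau_{h^{n+p}(w)} = \tau_{h^{n}(w)}$ (taking $i = n$ in $(*)$, since $h^{n+p}(w) = h^{p}(h^n(w))$ and I can apply the word-level periodicity proved at the end of Lemma~\ref{lem:finite_annotation_exists}). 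Thus there is a well-defined function $S : A \to A$ with $S(\Theta(w)) = \Theta(h(w))$, so $\Theta(w) = \Theta(y)$ forces $\Theta(h(w)) = S(\Theta(w)) = S(\Theta(y)) = \Theta(h(y))$.

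\medskip

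\textbf{Claim (iii)} is where I expect the only real work, though it is still short. Fix $u \in \Sigma^*$ and consider the $i$-th coordinate of $\Theta(wu)$, namely $\tau_{h^i(wu)}$. Since $h$ is a morphism, $h^i(wu) = h^i(w)\,h^i(u)$, and since $\tau_{xy} = \tau_y \circ \tau_x$ we get $\tau_{h^i(wu)} = \tau_{h^i(u)} \circ \tau_{h^i(w)}$. The factor $\tau_{h^i(w)}$ is exactly the $i$-th coordinate of $\Theta(w)$ (for $i \leq n+p-1$, and determined by periodicity otherwise), and the factor $\tau_{h^i(u)}$ depends only on $u$ and $i$, not on $w$. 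Hence each coordinate of $\Theta(wu)$ is obtained by post-composing the corresponding coordinate of $\Theta(w)$ with a quantity depending only on $(u,i)$. Therefore $\Theta(w) = \Theta(y)$ — which gives $\tau_{h^i(w)} = \tau_{h^i(y)}$ for every $i$ in the recorded range — yields $\tau_{h^i(wu)} = \tau_{h^i(u)} \circ \tau_{h^i(w)} = \tau_{h^i(u)} \circ \tau_{h^i(y)} = \tau_{h^i(yu)}$ coordinatewise, so $\Theta(wu) = \Theta(yu)$. The main obstacle, such as it is, is bookkeeping: one must be careful that the equality of the finite tuples $\Theta(w)$ and $\Theta(y)$ really does propagate to \emph{all} indices $i \geq 0$ (not just those below $n+p$), which is precisely what the periodicity guarantees and why the reduction to the full sequence $\Xi$ at the outset pays off.
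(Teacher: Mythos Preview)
Your proof is correct and follows essentially the same approach as the paper: part (i) by finiteness of $(Q^Q)^{n+p}$, part (ii) by the shift-plus-periodicity observation $\tau_{h^{n+p}(w)} = \tau_{h^n(w)}$, and part (iii) by the coordinatewise factorization $\tau_{h^i(wu)} = \tau_{h^i(u)} \circ \tau_{h^i(w)}$. One small remark: in part (iii) you worry about propagating the equality $\tau_{h^i(w)} = \tau_{h^i(y)}$ to \emph{all} indices $i \ge 0$, but this is unnecessary --- since $\Theta(wu)$ only records indices $i = 0,\ldots,n+p-1$, the hypothesis $\Theta(w) = \Theta(y)$ already gives exactly the equalities you need, and no appeal to periodicity is required for this part (the paper's proof of (iii) accordingly makes none).
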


\begin{proof}
Part (i) comes from the fact
that  each of the $n+p$ coordinates of $\Theta(w)$ comes from the finite set $Q^Q$.
For (ii), we calculate:
$$\begin{array}{lcl@{\qquad}l}
  \Theta(h(w)) & = & (\tau_{h(w)}, \tau_{h^2(w)}, \tau_{h^3(w)}, \ldots ,\tau_{h^{n+p}(w)}) \\
  &=& (\tau_{h(w)}, \tau_{h^2(w)}, \tau_{h^3(w)}, \ldots,  \tau_{h^{n+p-1}(w)} ,\tau_{h^{n}(w)})  
& \mbox{by Lemma~\ref{lem:finite_annotation_exists}}
\\
 & = & (\tau_{h(y)}, \tau_{h^2(y)}, \tau_{h^3(y)}, \ldots,   \tau_{h^{n+p-1}(y)}, \tau_{h^{n}(y)}) = \Theta(h(y))
 & \mbox{since $\Theta(w) = \Theta(y)$} 
\end{array}
$$
Part (iii) uses $\Theta(w) = \Theta(y)$ as follows:
$$\begin{array}{lcl@{\qquad}l}
\Theta(w u) & = & (\tau_{wu}, \tau_{h(wu)}, \tau_{h^2(wu)}, \ldots ,\tau_{h^{n+p-1}(wu)})  \\
  & = & (\tau_u\circ \tau_{w},  \tau_{h(u)}\circ \tau_{h(w)}, \tau_{h^2(u)}\circ \tau_{h^2(w)}, \ldots ,\tau_{h^{n+p-1}(u)}\circ \tau_{h^{n+p-1}(w)})  \\
  & = & (\tau_u\circ \tau_{y},  \tau_{h(u)}\circ \tau_{h(y)}, \tau_{h^2(u)}\circ \tau_{h^2(y)}, \ldots ,\tau_{h^{n+p-1}(u)}\circ \tau_{h^{n+p-1}(y)}) 
  = \Theta(y u)
\end{array}
$$
\end{proof}


\begin{definition}[$\overline{h}$]\label{Def-hbar}
  Given a  transducer $M$ and a morphism $h$, 
  let $A$ be as in Lemma~\ref{lemma-for-welldefinedness}(i).  Define
  the morphism $\overline{h}: \Sigma \times A \to (\Sigma \times A)^*$ as follows.
  For for all $\sigma \in \Sigma$, whenever $h(\sigma) = s_1s_2s_3\cdots s_\ell$, let
  $$\overline{h}((\sigma, \Theta(w))) \;\;=\;\; 
    (s_1, \Theta(h w))\;\; (s_2, \Theta((hw)s_1))
    \;\; (s_3, \Theta((hw)s_1s_2))
    \;\; \cdots \;\; (s_{\ell}, \Theta((h w)s_1s_2\cdots s_{\ell-1}))$$
\end{definition}
By repeated use of Lemma~\ref{lemma-for-welldefinedness}, $\overline{h}$  is well-defined.
Notice that $|\overline{h}(\sigma, a)| = |h(\sigma)|$ for all $\sigma$. 

\begin{lemma}\label{lem:hannotate}
  For all $\sigma\in\Sigma$, all $w\in \Sigma^*$
  and all natural numbers $n$, if $h^n(\sigma) = s_1 s_2 \cdots s_{\ell}$,
  then 
  $$\overline{h}^n((\sigma, \Theta(w))) \;\;=\;\;
  (s_1, \Theta(h^n w))\;\; (s_2, \Theta((h^n w)s_1)) \;\; (s_3, \Theta((h^n w)s_1s_2)) \;\cdots\; (s_{\ell}, \Theta((h^n w)s_1s_2\cdots s_{\ell-1})).$$ 
  In particular, for $1\leq i \leq \ell$, the first component of the $i^{th}$ term
  in $h^n(\sigma,\Theta(w))$ is $s_i$. 
\end{lemma}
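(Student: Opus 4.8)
The plan is to prove the formula by induction on $n$, with the base cases being essentially definitional. For $n=0$ we have $h^0(\sigma)=\sigma$, so the claimed product is the single term $(\sigma,\Theta(w))$, which equals $\overline{h}^0((\sigma,\Theta(w)))$ because $h^0 w=w$. For $n=1$ the formula is exactly Definition~\ref{Def-hbar} of $\overline{h}$. So the content of the lemma lies entirely in propagating the formula through iteration.

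For the inductive step I would use the factorization $\overline{h}^{n+1}=\overline{h}^n\circ\overline{h}$ together with the fact that $\overline{h}^n$ is a morphism and hence distributes over concatenation. Writing $h(\sigma)=t_1\cdots t_m$, the definition of $\overline{h}$ expresses $\overline{h}((\sigma,\Theta(w)))$ as the product of the factors $(t_j,\Theta(w_j))$ with $w_j=(hw)\,t_1\cdots t_{j-1}$. Applying $\overline{h}^n$ factor by factor and invoking the induction hypothesis on each pair $(t_j,w_j)$ expands the $j$-th factor according to $h^n(t_j)$. Concatenating over $j$ recovers $h^n(t_1)\cdots h^n(t_m)=h^n(h(\sigma))=h^{n+1}(\sigma)=s_1\cdots s_\ell$, so the first components of the resulting terms are exactly $s_1,\ldots,s_\ell$ read in order; this also yields the ``In particular'' clause immediately.

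The only real obstacle is checking that the $\Theta$-arguments coming out of the induction hypothesis coincide with the ones in the target formula. Inside the $j$-th factor, writing $h^n(t_j)=s'_1\cdots s'_q$, the induction hypothesis produces arguments $\Theta((h^n w_j)\,s'_1\cdots s'_{i-1})$. Here I would rewrite $h^n w_j=h^n((hw)\,t_1\cdots t_{j-1})=(h^{n+1}w)\,h^n(t_1)\cdots h^n(t_{j-1})$, using that $h^n$ is a morphism. Since $h^n(t_1)\cdots h^n(t_{j-1})$ is precisely the prefix $s_1\cdots s_{p_{j-1}}$ of $s_1\cdots s_\ell$, where $p_{j-1}$ is the combined length of the first $j-1$ blocks, and $s'_1\cdots s'_{i-1}=s_{p_{j-1}+1}\cdots s_{p_{j-1}+i-1}$, the whole argument collapses to $\Theta((h^{n+1}w)\,s_1\cdots s_{k-1})$ with global index $k=p_{j-1}+i$, while $s'_i=s_k$. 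This matches the desired $k$-th term exactly, and letting $j$ and $i$ range produces all $k$ from $1$ to $\ell$ in order, closing the induction.

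Finally, I would note that every $\Theta$-argument encountered is legitimate (i.e.\ $\overline{h}$ is applied only to genuine pairs $(\sigma,\Theta(w))$) by repeated use of Lemma~\ref{lemma-for-welldefinedness}, exactly as already observed right after Definition~\ref{Def-hbar}, so no separate well-definedness argument is needed.
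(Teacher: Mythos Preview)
Your proof is correct and follows essentially the same inductive argument as the paper, with the only cosmetic difference that you factor $\overline{h}^{\,n+1}=\overline{h}^{\,n}\circ\overline{h}$ whereas the paper uses $\overline{h}^{\,n+1}=\overline{h}\circ\overline{h}^{\,n}$. The bookkeeping for the $\Theta$-arguments is symmetric in the two orderings, and your verification that the prefixes match up is exactly the content of the paper's computation.
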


\begin{proof} By induction on $n$.
  For $n = 0$, the claim is trivial.   Assume that it holds for $n$.
  Let $h^n(\sigma) = s_1 s_2 \cdots s_{\ell}$,
  and  for $1\leq i \leq \ell$,
  let $h(s_i) = t^i_1 t^i_2 \cdots t^i_{k_i}$.  Thus
  $
  h^{n+1}(\sigma) =
  h(s_1 s_2\cdots s_{\ell}) = t^1_1 t^1_2 \cdots t^1_{k_1} t^2_1 t^2_2 \cdots t^2_{k_2}
  t^\ell_1 t^\ell_2 \cdots t^\ell_{k_\ell}
  $.
  Then:
  $$\begin{array}{clcl}
  & \overline{h}(\overline{h}^n(\sigma, \Theta(w))) 
  \;=\; \hbar(s_1, \Theta((h^n w)))\;\; \hbar(s_2, \Theta((h^n w)s_1))
  \;\; \hbar(s_3, \Theta((h^n w)s_1s_2))\\
  & \hspace{1.5in}\cdots\;\; \hbar(s_{\ell},
   \Theta((h^n w)s_1s_2\cdots s_{\ell-1}))
  \end{array}
  $$
  For $1\leq i \leq \ell$, we have
  $$\begin{array}{clcl}
   & \hbar(s_i, \Theta((h^n w)s_1\cdots s_{i-1})) \\
   = &   (t^i_1, \Theta((h h^n w)h(s_1\cdots s_{i-1})))
   \quad (t^i_2, \Theta((hh^nw)h(s_1\cdots s_{i-1})t^i_1))
  \\
   & \hspace{.5in}\cdots\quad 
  (t^i_{k_i}, \Theta(hh^nw)h(s_1\cdots s_{i-1}) t^i_1 t^i_2\cdots t^i_{k_i-1}))  \\
  = & (t^i_1, \Theta((h^{n+1}w) t^1_1 t^1_2 \cdots t^1_{k_1} \cdots t^{i-1}_1 t^{i-1}_2 \cdots t^{i-1}_{k_{i-1}}))
  \quad 
  (t^i_2, \Theta((h^{n+1}w) t^1_1 t^1_2 \cdots
   t^1_{k_1} \cdots t^{i-1}_1 t^{i-1}_2 \cdots t^{i-1}_{k_{i-1}} t^i_1))
   \\
   & \hspace{.5in} \cdots \quad 
   (t^i_{k_i}, \Theta((h^{n+1}w) t^1_1 t^1_2 \cdots
   t^1_{k_i} \cdots t^{i-1}_1 t^{i-1}_2 \cdots t^{i-1}_{k_{i-1}} t^i_1 \cdots t^i_{k_i -1}))
  \end{array}
  $$
  Concatenating the  sequences $\hbar(s_i, \Theta((h^n w)s_1\cdots s_{i-1}))$
  for $i = 1, \ldots, \ell$ completes our induction step.
\end{proof}

\begin{lemma}\label{lem:state_annotated_is_morphic}
  Let $M = (\Sigma, \Delta, Q, q_0, \delta, \lambda)$ be a  transducer, 
  let $h$ be a morphism prolongable on the letter $x_1$,   
  and write $h^\omega(x_1)$ as  $x = x_1x_2x_3\cdots x_n \cdots$.
 Let $\Theta$ be from Definition~\ref{def-Theta}.  Using this,
 let  $A$ be  from
 Lemma~\ref{lemma-for-welldefinedness}(i), and
  $\overline{h}$ from Definition~\ref{Def-hbar}.    Then
  \begin{enumerate}
  \item[(i)]  $\overline{h}$ is prolongable on  
$(x_1, \Theta(\e))$. 
  \item[(ii)]   Let   $c: \Sigma \times A \to \Sigma \times Q$ be the coding 
$c(\sigma, \Theta(w)) = (\sigma, \tau_w(q_0))$.
    Then $c$ is well-defined.
\item[(iii)]  The image under $c$ of $\overline{h}^\omega((x_1, \Theta(\e))$
is
  \begin{equation}
    \begin{array}{lcl}
  z & \;\;=\;\; & (x_1, \d(q_0, \e))\;\;
  (x_2, \d(q_0, x_1))\;\;
  (x_3, \d(q_0, x_1x_2))\;\;\cdots\;\; (x_n, \d(q_0, x_1x_2\cdots x_{n-1}))
  \;\;\cdots
  \end{array}
  \label{eq-z}
  \end{equation}
This sequence $z$ is morphic in the alphabet $\Sigma \times Q$.
  \end{enumerate}
\end{lemma}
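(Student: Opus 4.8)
The plan is to establish the three parts in order, with part~(iii) being the substantive one and parts~(i) and~(ii) short observations that feed into it. The whole argument rides on the explicit formula for the iterates of $\overline{h}$ supplied by Lemma~\ref{lem:hannotate}, combined with the elementary fact that $h^n(\e) = \e$ for every $n$, so that the annotation $\Theta(h^n\e)$ sitting at the left end always collapses to $\Theta(\e)$. For part~(i), since $h$ is prolongable on $x_1$ I would write $h(x_1) = s_1 s_2 \cdots s_\ell$ with $s_1 = x_1$ and $\ell \geq 2$; unfolding Definition~\ref{Def-hbar} at the letter $(x_1, \Theta(\e))$, the first term of $\overline{h}((x_1,\Theta(\e)))$ is $(s_1, \Theta(h\e)) = (x_1, \Theta(\e))$, and the word has $\ell \geq 2$ terms, so $\overline{h}$ is prolongable on $(x_1,\Theta(\e))$. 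Because $|\overline{h}(\sigma,a)| = |h(\sigma)|$, the length of $\overline{h}^n((x_1,\Theta(\e)))$ equals $|h^n(x_1)|\to\infty$, so $\overline{h}^\omega((x_1,\Theta(\e)))$ exists and is infinite.

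For part~(ii), the key point is simply that the first coordinate of $\Theta(w)$ is $\tau_w$, by Definition~\ref{def-Theta}. Hence $\Theta(w) = \Theta(y)$ forces $\tau_w = \tau_y$, and in particular $\tau_w(q_0) = \tau_y(q_0)$. Thus the value $c(\sigma, \Theta(w)) = (\sigma, \tau_w(q_0))$ depends only on the letter $(\sigma,\Theta(w)) \in \Sigma\times A$ and not on the chosen representative $w$, so the coding $c$ is well-defined.

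For part~(iii), I would apply Lemma~\ref{lem:hannotate} with $w = \e$: if $h^n(x_1) = s_1 \cdots s_\ell$, then, using $h^n\e = \e$,
$$\overline{h}^n((x_1,\Theta(\e))) = (s_1, \Theta(\e))\,(s_2, \Theta(s_1))\,(s_3,\Theta(s_1 s_2))\cdots(s_\ell, \Theta(s_1\cdots s_{\ell-1})).$$
Since $s_1 \cdots s_\ell = h^n(x_1)$ is the length-$\ell$ prefix of $x = h^\omega(x_1)$, we have $s_i = x_i$, and letting $n\to\infty$ shows that the $i$-th term of $\overline{h}^\omega((x_1,\Theta(\e)))$ is $(x_i, \Theta(x_1\cdots x_{i-1}))$. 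Applying $c$ and using $\tau_w(q_0) = \delta(q_0, w)$ from Definition~\ref{definition-tau} turns this into $(x_i, \delta(q_0, x_1\cdots x_{i-1}))$, exactly the $i$-th term of $z$ in~\eqref{eq-z}. Finally $z$ is morphic over $\Sigma\times Q$ because it is the image under the coding $c$ (part~(ii)) of the pure morphic sequence $\overline{h}^\omega((x_1,\Theta(\e)))$ (part~(i)).

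The main obstacle here is bookkeeping rather than anything conceptual: one must verify that the annotations align precisely so that the $i$-th symbol records the prefix $x_1\cdots x_{i-1}$ read so far, and hence the state $\delta(q_0, x_1\cdots x_{i-1})$ after $c$. This alignment is exactly what Lemma~\ref{lem:hannotate} delivers, and the only extra care needed is the observation $h^n\e = \e$, which pins the leading annotation at $\Theta(\e)$ as $n$ grows and thereby makes the limit converge to the intended sequence $z$.
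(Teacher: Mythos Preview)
Your proposal is correct and follows essentially the same route as the paper: for (i) you unfold Definition~\ref{Def-hbar} at $(x_1,\Theta(\e))$ and use $h(\e)=\e$; for (ii) you read off $\tau_w$ as the first coordinate of $\Theta(w)$; and for (iii) you invoke Lemma~\ref{lem:hannotate} with $w=\e$, identify $s_i=x_i$ via the prefix property, apply $c$, and use $\tau_w(q_0)=\delta(q_0,w)$. The only differences are cosmetic: you spell out $\ell\ge 2$ and the length argument for the limit being infinite, which the paper leaves implicit.
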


\begin{proof}
For (i),   write $h(x_1)$ as $ x_1 x_2 \cdots x_\ell$.
Using the fact that $h^i(\e) = \e$ for all $i$, we see that
$$ 
\begin{array}{lcl}
\overline{h}((x_1, \Theta(\e))) & = & (x_1, \Theta(\e)) 
\quad
  (x_2, \Theta(x_1)) \quad \cdots 
  \quad (x_{\ell}, \Theta(x_1,\ldots, x_{\ell -1})).
\end{array}
$$
This verifies the prolongability.

For (ii): if $\Theta(w) = \Theta(u)$,
  then $\tau_w$ and $\tau_u$ are the first component of $\Theta(w)$ and are thus equal.

We turn to (iii).
Taking $w = \e$ in Lemma~\ref{lem:hannotate} shows that  $\overline{h}^\omega((x_1, \Theta(\e))$ is
$$(x_1, \Theta(\e)) \quad (x_2, \Theta(x_1))
\quad (x_3, \Theta(x_1 x_2)) \quad \cdots \quad
(x_m, \Theta(x_1 x_2\cdots x_{m-1})) \quad \cdots .
$$
The image of this sequence under the coding $c$ is 
$$\begin{array}{lcl}
  (x_1, \tau_{\e}(q_0)) \quad (x_2, \tau_{x_1}(q_0))
  \quad (x_3, \tau_{x_1 x_2}(q_0))\quad \cdots\quad
  (x_m, \tau_{x_1 x_2\cdots x_{m-1}}(q_0)) \quad \cdots .
\end{array}
$$
In view of the  $\tau$ functions' definition (Def.~\ref{definition-tau}), we obtain $z$ in
 (\ref{eq-z}).
By definition,   $z$ is   morphic. 
\end{proof}

This is most of the work required to prove
Theorem~\ref{the:transducts_preserve_morphic},
 the main result of this section. 
 
\begin{proof}[Theorem~\ref{the:transducts_preserve_morphic}]  
  Since $x$ is morphic there is a morphism $h: \Sigma' \to (\Sigma')^*$, a coding $c: \Sigma' \to \Sigma$, 
  and an initial letter $x_1 \in \Sigma'$ so that $x = c(h^\omega(x_1))$.   We are to show that $M(c(h^\omega(x_1)))$
  is morphic.    Since $c$ is computable by a transducer, we have
  $x = (M \circ c)(h^\omega(x_1))$, where $\circ$ is the composition of transducers from Definition~\ref{definition-composition-transducers}.
  It is thus sufficient to show that given a transducer $M$, the sequence $M(h^\omega(x_1))$ is morphic.
  
  By Lemma~\ref{lem:state_annotated_is_morphic}, the sequence 
  $$z = (x_1, \d(q_0, \e)) \quad (x_2, \d(q_0, x_1))\quad
  (x_3, \d(q_0, x_1x_2))\quad \cdots\quad (x_n, \d(q_0, x_1x_2\cdots x_{n-1}))\quad\cdots$$ is morphic. 
  The output function of $M$ is a morphism  $\lambda: \Sigma\times Q \to \Delta^*$.  By 
  Corollary~\ref{cor:close:pure},
  $\lambda(z)$ is morphic or finite.   But  $\lambda(z)$ is exactly $M(x)$; indeed, the definition of $M(x)$ is  
  basically the same as the definition of $\lambda(z)$.  This proves the theorem.
\end{proof}

\subsection{Substitutivity of transducts}

We are also interested in analyzing the $\alpha$-substitutivity of transducts. 
We claim that if a sequence $x$ is $\alpha$-substitutive, then $M(x)$ is also $\alpha$-substitutive for all $M$. 

As a first step, we show that annotating a morphism does not change $\alpha$-substitutivity.
\begin{definition}
  Let $\Sigma$ be an alphabet and $A$ any set.
  Let $w \; = \; (b_1,a_1) \;\; (b_2,a_2) \;\;  \ldots \;\;  (b_k,a_k) \in (\Sigma \times A)^*$ be a word.
  We call $A$ the \emph{set of annotations}.
  We write $\floor{w}$ for the word $b_1b_2\ldots b_k$,
  that is, the word obtained by \emph{dropping the annotations}.

  A morphism $\overline{h} : (\Sigma \times A) \to (\Sigma \times A)^*$
  is an \emph{annotation} of $h : \Sigma \to \Sigma^*$
  if $h(b) = \floor{\overline{h}(b,a)}$ for all $b\in\Sigma$, $a \in A$.
\end{definition}

Note that the morphism $\overline{h}$ from Definition~\ref{Def-hbar} is an annotation of $h$ in this sense.
Then from the following proposition it follows that if $x$ is $\alpha$-substitutive,
then the sequence $z$ in Lemma~\ref{lem:state_annotated_is_morphic} is also $\alpha$-substitutive. 

\begin{proposition}
If $x = h^\omega(\sigma)$ is an $\alpha$-substitutive morphic sequence with morphism $h: \Sigma \to \Sigma^*$ and $A$ is any set of annotations, then any annotated morphism $\overline{h}: \Sigma \times A \to (\Sigma \times A)^*$ also has an infinite fixpoint $\overline{h}^\omega((\sigma,a))$ which is also $\alpha$-substitutive.
\end{proposition}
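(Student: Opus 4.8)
The plan is to relate the incidence matrix of the annotated morphism $\overline{h}$ to that of $h$, and then invoke Lemma~\ref{lem:matrix:zero} to conclude they share the same dominant eigenvalue. The central observation is that annotation only refines letters by attaching a tag from $A$: the rule $h(b) = \floor{\overline{h}(b,a)}$ says that dropping annotations recovers $h$. Hence, for any annotated letter $(b,a)$, the total number of occurrences of letters whose first component is $c$ in $\overline{h}(b,a)$ equals $\occ{c}{h(b)}$, regardless of $a$.

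First I would set up the relevant fixpoint. Since $x = h^\omega(\sigma)$ exists and is $\alpha$-substitutive, $h$ is prolongable on $\sigma$, i.e.\ $h(\sigma) = \sigma z$ for some $z$. To iterate $\overline{h}$ we must choose a starting annotation $a$ so that $\overline{h}$ is prolongable on $(\sigma,a)$; the first letter of $\overline{h}(\sigma,a)$ has first component $\sigma$, so I would pick $a$ to be the annotation that this first letter carries (exactly as in Lemma~\ref{lem:state_annotated_is_morphic}(i), where the choice $a = \Theta(\e)$ makes the construction self-reproducing). With such an $a$, $\overline{h}^\omega((\sigma,a))$ is an infinite fixpoint, and $\floor{\overline{h}^\omega((\sigma,a))} = h^\omega(\sigma) = x$, so it is genuinely infinite.

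Next I would carry out the eigenvalue comparison. Let $M$ be the incidence matrix of $h$ over $\Sigma$ and let $N$ be the incidence matrix of $\overline{h}$ over $\Sigma \times A$. Group the rows and columns of $N$ by their $\Sigma$-component. The key identity above gives, for each fixed target letter $c$ and each column $(b,a)$, the column-sum over all annotations $\sum_{a'\in A} N_{(c,a'),(b,a)} = M_{c,b}$. So $N$ is a non-negative integer matrix whose block structure (after grouping by $\Sigma$-component) sums down to $M$. To apply Lemma~\ref{lem:matrix:zero} directly I would instead restrict attention to the letters $(b,a)$ that actually occur in the fixpoint $\overline{h}^\omega((\sigma,a))$: on this reachable alphabet each $\Sigma$-letter $b$ occurs with a \emph{unique} annotation (the annotation is determined by context, as in the $\Theta$ construction), so the sub-incidence matrix on the reachable letters is literally $M$ up to renaming. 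The remaining (unreachable) annotated letters contribute exactly the padding rows/columns of the form handled by Lemma~\ref{lem:matrix:zero}, all of whose entries relevant to reachable letters vanish. Thus $N$ restricted to the relevant alphabet has the same non-zero eigenvalues as $M$, hence the same dominant eigenvalue $\alpha$.

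The main obstacle is the bookkeeping of annotations: in full generality a single letter $b$ may appear with several different annotations inside the fixpoint, so the clean "$N$ equals $M$ up to renaming" statement need not hold verbatim, and one must argue via the column-sum identity together with Perron--Frobenius rather than a direct block-triangular reduction. I would handle this either by (a) showing the reachable alphabet uses each $\Sigma$-letter with a single annotation --- which is the situation in the intended application via $\Theta$ --- or, for the general proposition, (b) observing that the map collapsing annotations is a surjective morphism of non-negative matrices under which $M = $ the ``quotient'' of $N$, and invoking Definition~\ref{definition-alpha-substitutive} with the condition (ii) that every letter of the reachable alphabet occurs in the fixpoint. The delicate point is verifying condition (ii) for the annotated system and checking that the collapsing does not introduce a strictly larger dominant eigenvalue; the column-sum identity is what rules this out.
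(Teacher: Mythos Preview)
Your primary route---reduce to Lemma~\ref{lem:matrix:zero} by arguing that on the reachable alphabet each $\Sigma$-letter carries a \emph{unique} annotation---is simply false, even in the intended application. Look at the paper's own worked example: for the Fibonacci morphism and the doubling transducer, the fixpoint of $\overline{h}$ contains all of $a_0,a_1,a_2,a_3,b_0,b_1,b_2,b_3$, so each of $a,b$ appears with four distinct annotations. The reachable incidence matrix is genuinely $8\times 8$, not a padded copy of the $2\times 2$ matrix of $h$, and Lemma~\ref{lem:matrix:zero} says nothing here. Your option~(a) must therefore be discarded entirely, not kept as the preferred case.

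Your option~(b) is headed in the right direction but is only half of the argument, and the half you do sketch is not made precise. The paper proves the proposition via two short eigenvector constructions that both rest on exactly the column-sum identity you wrote down, $\sum_{a'} N_{(c,a'),(b,a)} = M_{c,b}$. One direction (\emph{collapse}, Lemma~\ref{lem:oh:h}): given a non-negative eigenvector $v$ of $N$ with eigenvalue $r$, set $w_b = \sum_{a} v_{(b,a)}$; then $Mw = rw$ and $w\neq 0$, so $r$ is an eigenvalue of $M$. By Perron--Frobenius the dominant eigenvalue of $N$ has a non-negative eigenvector, hence is $\le \alpha$. The other direction (\emph{inflate}, Lemma~\ref{lem:h:oh}): given an eigenvector $v$ of $M$, set $\overline{v}_{(b,a)} = v_b$; the same identity (read as a block row/column sum) shows $\overline{v}$ is an eigenvector of $N$ with the same eigenvalue, so every eigenvalue of $M$---in particular $\alpha$---is an eigenvalue of $N$. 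Together these give equality of dominant eigenvalues. Your write-up contains the collapse idea implicitly (``quotient of $N$''), but never states the inflate direction, without which you cannot conclude that the dominant eigenvalue of $\overline{h}$ is at least~$\alpha$. You should replace the appeal to Lemma~\ref{lem:matrix:zero} with these two explicit eigenvector computations.
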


The proof of this proposition is in two lemmas: first that the eigenvalues of 
the morphism are preserved by the annotation process, and second that if $\alpha$ is the
 dominant eigenvalue for $h$, then no greater eigenvalues are introduced for $\overline{h}$.

\begin{lemma}\label{lem:h:oh}
  All eigenvalues for $h$ are also eigenvalues for any annotated version $\overline{h}$ of $h$.
\end{lemma}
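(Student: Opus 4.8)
The plan is to exhibit an explicit injection from eigenvectors of $\smatrix{h}$ into eigenvectors of $\smatrix{\overline{h}}$ that preserves the eigenvalue, so that every eigenvalue of $h$ reappears as an eigenvalue of $\overline{h}$. The crucial structural fact I would exploit is the defining property of an annotation: $h(b) = \floor{\overline{h}(b,a)}$ for every $b\in\Sigma$ and every $a\in A$. Dropping annotations does not change which underlying $\Sigma$-letters are produced, so the number of occurrences of any letter $c\in\Sigma$ among the first components of $\overline{h}(b,a)$ equals $\occ{c}{h(b)}$, and this count is the \emph{same for every} annotation $a$. This uniformity over the second coordinate is exactly what lets eigendata lift.

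Concretely, let $M = \smatrix{h}$ over $\Sigma$ and $N = \smatrix{\overline{h}}$ over $\Sigma\times A$. The observation above gives, for all $b,c\in\Sigma$ and all $a,a'\in A$,
\begin{align*}
  \sum_{a''\in A} N_{(c,a''),(b,a)} = M_{c,b},
\end{align*}
with the left-hand sum being independent of $a$. Now take any eigenvector $v = (v_c)_{c\in\Sigma}$ of $M$ with eigenvalue $\mu$, so $\sum_b M_{c,b}\,v_b = \mu\,v_c$. I would define a lifted vector $\overline{v}$ on $\Sigma\times A$ by setting $\overline{v}_{(b,a)} = v_b$, i.e.\ copying the value across all annotations. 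Then
\begin{align*}
  (N\overline{v})_{(c,a)}
    = \sum_{b,a'} N_{(c,a),(b,a')}\,\overline{v}_{(b,a')}
    = \sum_{b} v_b \sum_{a'} N_{(c,a),(b,a')},
\end{align*}
and I would rewrite the inner sum $\sum_{a'} N_{(c,a),(b,a')}$ using the column-sum identity to recover $M_{c,b}$, yielding $(N\overline{v})_{(c,a)} = \sum_b M_{c,b} v_b = \mu v_c = \mu\,\overline{v}_{(c,a)}$. Hence $\overline{v}$ is an eigenvector of $N$ with the same eigenvalue $\mu$, which proves the lemma.

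The main obstacle is getting the index bookkeeping in the incidence matrices exactly right, in particular being careful about whether $\smatrix{h}$ is read column-wise (the $j$-th column records the letter counts of $h(j)$) so that the correct sum — over the \emph{produced} coordinate, holding the source coordinate fixed — is the one that collapses via the annotation identity. The subtlety is that the identity I need aggregates over the annotation of the \emph{target} letter $c$ while keeping the source $(b,a)$ fixed; this matches the definition of an annotation precisely because $\floor{\overline{h}(b,a)}$ depends only on $b$, so the multiset of first components produced by $\overline{h}(b,a)$ is $h(b)$ regardless of $a$. Once the correct orientation is fixed, the lifted eigenvector $\overline{v}_{(b,a)} = v_b$ works uniformly and the verification is the short computation above; I would also remark that this argument says nothing yet about \emph{new} eigenvalues of $N$, which is exactly the content deferred to the companion lemma bounding the dominant eigenvalue of $\overline{h}$.
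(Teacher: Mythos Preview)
Your approach is exactly the one the paper takes: lift a (right) eigenvector $v$ of $\smatrix{h}$ to the constant-on-fibres vector $\overline{v}_{(b,a)} = v_b$ and check that $N\overline{v} = \mu\overline{v}$. But there is a real index mismatch that you flag as ``the main obstacle'' and then do not actually resolve. The annotation identity you correctly derive is
\[
  \sum_{a''\in A} N_{(c,a''),(b,a)} \;=\; M_{c,b},
\]
a sum over the annotation of the \emph{row} index (the produced letter), with the column $(b,a)$ held fixed; this is exactly what $\floor{\overline{h}(b,a)} = h(b)$ gives. However, in your computation of $(N\overline{v})_{(c,a)}$ the inner sum that appears is
\[
  \sum_{a'\in A} N_{(c,a),(b,a')},
\]
a sum over the annotation of the \emph{column} index (the source letter), with the row $(c,a)$ held fixed. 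These are different sums, and the second one need not equal $M_{c,b}$: take $\Sigma=\{b\}$, $A=\{1,2\}$, $h(b)=b$, and the annotation $\overline{h}(b,1)=\overline{h}(b,2)=(b,1)$; then $\sum_{a'} N_{(b,1),(b,a')} = 2 \neq 1 = M_{b,b}$. So the step ``rewrite the inner sum using the column-sum identity'' does not go through as written.

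The repair is small: lift \emph{left} eigenvectors instead. If $v^{T} M = \mu\, v^{T}$, then with the same $\overline{v}_{(b,a)} = v_b$ one gets
\[
  (\overline{v}^{\,T} N)_{(b,a)} \;=\; \sum_{c} v_c \sum_{a'} N_{(c,a'),(b,a)} \;=\; \sum_c v_c\, M_{c,b} \;=\; \mu\, v_b \;=\; \mu\,\overline{v}_{(b,a)},
\]
and now the inner sum is precisely your column-sum identity. Since a matrix and its transpose have the same eigenvalues, this suffices. (For what it is worth, the paper's printed argument makes the same row/column slip: it asserts that the \emph{row} sum of each block equals $m_{a,b}$, whereas the annotation property only guarantees this for the column sums. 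The intended content is correct, but the eigenvector check only works on the left.)
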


\begin{proof}
  Let $M = (m_{i,j})_{i,j \in \Sigma}$ be the incidence matrix of $h$. 
  Order the elements of the annotated alphabet $\Sigma \times A$ lexicographically. 
  Then the incidence matrix of $\overline{h}$, call it $N = (n_{i,j})_{i,j \in \Sigma \times A}$, can be thought of 
  as a block matrix where the blocks have size $|A| \times |A|$ and there are $|\Sigma| \times |\Sigma|$ such blocks in $N$. 
  Note that by the definition of annotation, the row sum in each row of the $(a,b)$ block of $N$ is $m_{a,b}$.
  To simplify the notation, for the rest of this proof we write $J$ for $|\Sigma|$ and $K$ for $|A|$.
  Suppose $v = (v_1, v_2, \ldots, v_{J})$ is a column eigenvector for $M$ with eigenvalue $\alpha$. 
  Consider
  $\overline{v} = (v_1, \ldots, v_1, v_2, \ldots, v_2, \ldots, v_n, \ldots v_n)$. 
  This  is a ``block vector":  the first $K$ entries are $v_1$, the second $K$ entries are $v_2$, and so on,
  for a total of $K\cdot J$ entries. We claim that $\overline{v}$ is a column eigenvector for $N$ with eigenvalue $\alpha$.

  Consider the product of row $k$ of $N$ with $\overline{v}$. This is 
  $\sum_{j = 1}^{K\cdot J} n_{k,j}\overline{v}_j =  \sum_{b = 1}^{J} v_b\cdot(\sum_{j=1}^{K}n_{k,Kb + j})$.
  Now  $k = Ka + r$.
  So $\sum_{j=1}^{K} n_{k, Kb + j}$ is the row sum of the $(a, b)$ block of $N$ and hence is $m_{a,b}$. Therefore, row $k$ of $N$ times $\overline{v}$ is $\sum_{b = 1}^{J} v_b m_{a, b} = \alpha v_a$,
  since $v$ is an eigenvector of $M$. Finally we note that 
  the $k$th entry of $\overline{v}$ is $v_a$ by its definition. Hence multiplying $\overline{v}$ by $N$ multiplies the $k$th entry of $\overline{v}$ by $\alpha$ for all $k$. 

  We have shown  that $\overline{v}$ is a column eigenvector of $N$ with eigenvalue $\alpha$, so the (column) eigenvalues of $M$ are all present in $N$. However, since a matrix and its transpose have the same eigenvalues, the (column) qualification on the eigenvalues is unnecessary.
  \qed
\end{proof}

If $\overline{h}$ is an annotation of $h$, then we have 
\begin{align}
  \occ{b'}{h(b)} = \sum_{a'\in A} \occ{\pair{b'}{a'}}{\,\overline{h}(\pair{b}{a})\,}
  &&\text{for all $b,b'\in\Sigma$ and $a \in A$} \label{eq:sub:sum}
\end{align}

\begin{lemma}\label{lem:oh:h}
  Let $h,\overline{h}$ be morphisms such that $\overline{h} : (\Sigma \times A) \to (\Sigma \times A)^*$
  is an annotation of $h : \Sigma \to \Sigma^*$.
  Then every eigenvalue of $\overline{h}$ with a non-negative eigenvector is also an eigenvalue for $h$.
\end{lemma}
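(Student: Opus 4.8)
The plan is to work directly with the incidence matrices and to pass from an eigenvector of $\overline{h}$ to one of $h$ by \emph{marginalizing out the annotation}. Write $M = (m_{b',b})_{b',b\in\Sigma}$ for the incidence matrix of $h$ and $N = (n_{(b',a'),(b,a)})_{(b',a'),(b,a)\in\Sigma\times A}$ for that of $\overline{h}$, so that $m_{b',b} = \occ{b'}{h(b)}$ and $n_{(b',a'),(b,a)} = \occ{\pair{b'}{a'}}{\overline{h}(\pair{b}{a})}$. The defining identity \eqref{eq:sub:sum} of an annotation then reads $m_{b',b} = \sum_{a'\in A} n_{(b',a'),(b,a)}$, and the feature I would exploit is that its right-hand side is \emph{independent of the annotation $a$}: summing the entries in a fixed-first-component block of any column of $N$ always recovers the corresponding entry of $M$.

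Now suppose $\mu$ is an eigenvalue of $\overline{h}$ with a non-negative eigenvector $\overline{v} = (\overline{v}_{(b,a)})$, i.e.\ $N\overline{v} = \mu\overline{v}$ with $\overline{v} \geq 0$ and $\overline{v} \neq 0$. I would define $v = (v_b)_{b\in\Sigma}$ by summing over the annotations, $v_b = \sum_{a\in A}\overline{v}_{(b,a)}$, and then verify that $v$ is an eigenvector of $M$ for $\mu$ by summing the eigenvector equation over the \emph{output} annotation $a'$. Concretely, for each $b'$,
\begin{align*}
  (Mv)_{b'} = \sum_{b} m_{b',b} v_b = \sum_{b,a}\Big(\sum_{a'} n_{(b',a'),(b,a)}\Big)\overline{v}_{(b,a)} = \sum_{a'}(N\overline{v})_{(b',a')} = \mu\sum_{a'}\overline{v}_{(b',a')} = \mu\, v_{b'},
\end{align*}
where the second equality substitutes $v_b = \sum_a \overline{v}_{(b,a)}$ and applies \eqref{eq:sub:sum} to rewrite $m_{b',b}$ (the annotation-independence being exactly what lets $m_{b',b}$ pass outside the sum over $a$), and the last equality uses $N\overline{v} = \mu\overline{v}$. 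Thus $v$ satisfies $Mv = \mu v$.

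The only point that needs care — and the only place the non-negativity hypothesis enters — is checking that $v \neq 0$, since otherwise $\mu$ would not actually be exhibited as an eigenvalue of $M$. Here $\overline{v}\geq 0$ is essential: each $v_b = \sum_a \overline{v}_{(b,a)}$ is a sum of non-negative terms, so $v = 0$ would force $\overline{v}_{(b,a)} = 0$ for every $(b,a)$, contradicting $\overline{v}\neq 0$. (Without non-negativity the marginal sums could cancel and $v$ could vanish, which is precisely why the lemma restricts to non-negative eigenvectors.) Hence $v$ is a genuine non-zero eigenvector of $M$ and $\mu$ is an eigenvalue of $h$. I expect this non-vanishing check to be the only real obstacle; the eigenvalue computation itself is a routine manipulation of the annotation identity. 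Combined with Lemma~\ref{lem:h:oh}, this yields that $h$ and $\overline{h}$ share the same dominating eigenvalue, since by Theorem~\ref{thm:perron} that eigenvalue of $\overline{h}$ comes with a non-negative eigenvector.
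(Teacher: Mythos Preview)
Your proof is correct and takes essentially the same approach as the paper: marginalize the eigenvector of $N$ over the annotation coordinate to obtain a vector for $M$, verify $Mv = \mu v$ via the identity \eqref{eq:sub:sum}, and use non-negativity to rule out $v = 0$. The paper's argument is identical up to notation (it uses $r$, $v$, $w$ where you use $\mu$, $\overline{v}$, $v$).
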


\begin{proof}
  Let $M = (m_{i,j})_{i,j \in \Sigma}$ be the incidence matrix of $h$
  and $N = (n_{i,j})_{i,j \in \Sigma\times A}$ be the incidence matrix of $\overline{h}$.
  Let $r$ be an eigenvalue of $N$ with corresponding eigenvector $v = (v_{\pair{b}{a}})_{\pair{b}{a} \in \Sigma\times A}$,
  that is, $Nv = rv$ and $v\ne 0$.
  We define a vector $w = (w_b)_{b \in \Sigma}$ as follows:
  $w_b = \sum_{a \in A} v_{\pair{b}{a}}$.
  We show that $Mw = rw$.
  Let $b' \in \Sigma$, then:
  \begin{align*}
    (Mw)_{b'} 
    &= \sum_{b \in \Sigma} M_{b',b}w_b 
    = \sum_{b \in \Sigma} \left( M_{b',b} \sum_{a \in A} v_{\pair{b}{a}} \right)\\
    &= \sum_{b \in \Sigma} \sum_{a \in A} M_{b',b} v_{\pair{b}{a}} 
    \stackrel{\text{by \eqref{eq:sub:sum}}}{=} \sum_{b \in \Sigma} \sum_{a \in A} \left(\sum_{a' \in A} N_{\pair{b'}{a'},\pair{b}{a}}\right) v_{\pair{b}{a}} 
    \\
    &= \sum_{a' \in A} \sum_{b \in \Sigma} \sum_{a \in A} N_{\pair{b'}{a'},\pair{b}{a}} v_{\pair{b}{a}} 
    \stackrel{Nv = rv}{=} \sum_{a' \in A} r v_{\pair{b'}{a'}} 
    =  r \sum_{a' \in A} v_{\pair{b'}{a'}} 
    =  r w_{b'}
  \end{align*}  
  Hence $Mw = rw$.
  If $w \ne 0$ it follows that $r$ is an eigenvalue of $M$.
  Note that if $v$ is non-negative, then $w \ne 0$. This proves the claim.
\end{proof}

\begin{corollary}\label{corollary:dominant}
  Let $h,\overline{h}$ be morphisms such that $\overline{h} : (\Sigma \times A) \to (\Sigma \times A)^*$
  is an annotation of $h : \Sigma \to \Sigma^*$.
  Then the dominant eigenvalue for $h$ coincides with the dominant eigenvalue for $\overline{h}$.
\end{corollary}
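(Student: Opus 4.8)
The plan is to derive the corollary directly from the two preceding lemmas together with the Perron--Frobenius theorem (Theorem~\ref{thm:perron}). Write $M$ and $N$ for the incidence matrices of $h$ and $\overline{h}$, respectively. Both are non-negative integer matrices, so each has a dominant eigenvalue in the sense of Theorem~\ref{thm:perron}; call these $\alpha$ (for $M$, equivalently for $h$) and $\beta$ (for $N$, equivalently for $\overline{h}$). By Perron--Frobenius both are real and non-negative, and each is greater than or equal to the absolute value of every other eigenvalue of its matrix. The goal is to show $\alpha = \beta$, which I would establish by proving the two inequalities $\beta \geq \alpha$ and $\alpha \geq \beta$ separately.

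For $\beta \geq \alpha$ I would invoke Lemma~\ref{lem:h:oh}: since $\alpha$ is an eigenvalue of $M$, it is also an eigenvalue of $N$. Because $\beta$ is the dominant eigenvalue of $N$ and $\alpha \geq 0$, we get $\beta \geq |\alpha| = \alpha$.

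For the reverse inequality $\alpha \geq \beta$, the key is to supply the hypothesis required by Lemma~\ref{lem:oh:h}, namely that $\beta$ has a non-negative eigenvector. This is exactly what Perron--Frobenius guarantees for the dominant eigenvalue of the non-negative matrix $N$. Hence Lemma~\ref{lem:oh:h} applies and shows that $\beta$ is an eigenvalue of $M$; since $\alpha$ is dominant for $M$ and $\beta \geq 0$, we conclude $\alpha \geq |\beta| = \beta$. Combining the two inequalities yields $\alpha = \beta$, as desired.

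The only point requiring care is the second inequality, and it is precisely where the non-negativity hypothesis of Lemma~\ref{lem:oh:h} would otherwise be an obstacle. That lemma does not assert that every eigenvalue of $\overline{h}$ descends to $h$, but only those admitting a non-negative eigenvector; a general eigenvalue of $N$ need not be an eigenvalue of $M$. The argument therefore hinges on the fact that the dominant eigenvalue is exactly the one that Perron--Frobenius equips with a non-negative eigenvector, so no separate analysis of the eigenvectors of $N$ is needed. With this observation the proof is short, consisting only of the two inequalities above.
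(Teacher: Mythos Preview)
Your proof is correct and follows essentially the same two-inequality argument as the paper: use Lemma~\ref{lem:h:oh} to get $\beta\ge\alpha$, then use Perron--Frobenius to supply a non-negative eigenvector for $\beta$ so that Lemma~\ref{lem:oh:h} gives $\alpha\ge\beta$. If anything, your write-up is cleaner---the paper's second step cites Lemma~\ref{lem:h:oh} where it plainly means Lemma~\ref{lem:oh:h}.
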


\begin{proof}
  By Lemma~\ref{lem:h:oh} every eigenvalue of $h$ is an eigenvalue of $\overline{h}$.
  Thus the dominant eigenvalue of $\overline{h}$ is greater or equal to that of $h$.
  By Theorem~\ref{thm:perron}, the dominant eigenvalue of a non-negative matrix is a real number $\alpha > 1$
  and its corresponding eigenvector is non-negative.
  By Lemma~\ref{lem:h:oh}, every eigenvalue of $\overline{h}$ with a non-negative eigenvector is also an eigenvalue of $h$. 
  Thus the dominant eigenvalue of $h$ is also greater or equal to that of $\overline{h}$.
  Hence the dominant eigenvalues of $h$ and $\overline{h}$ must be equal.
\end{proof}

\begin{theorem}
  Let $\alpha$ and $\beta$ be  multiplicatively independent real numbers.
 If $v$ is a $\alpha$-substitutive sequence and $w$ is an $\beta$-substitutive sequence,
  then $v$ and $w$ have
 no common non-erasing transducts except for the ultimately periodic sequences. 
\label{theorem-apply-Durand}
\end{theorem}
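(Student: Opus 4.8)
The plan is to deduce everything from Durand's theorem (Theorem~\ref{theorem-Durand}) once we know that a common non-erasing transduct is simultaneously $\alpha$- and $\beta$-substitutive. So suppose $u = M_1(v) = M_2(w)$, where $M_1$ and $M_2$ are non-erasing transducers. Since each transition of a non-erasing transducer emits a nonempty word, $u$ is genuinely infinite, so no degenerate finite case arises. It therefore suffices to prove the single lemma: \emph{for every non-erasing transducer $M$ and every $\alpha$-substitutive sequence $x$, the transduct $M(x)$ is $\alpha$-substitutive.} Applying this to $(M_1,v,\alpha)$ and to $(M_2,w,\beta)$ shows that $u$ is both $\alpha$- and $\beta$-substitutive.

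To prove the lemma I would reuse the decomposition of transduction from Section~\ref{section-closure-transduction}. First absorb the coding witnessing the $\alpha$-substitutivity of $x$ into $M$; composing a transducer with a coding (Definition~\ref{definition-composition-transducers}) keeps it non-erasing, so without loss of generality $x = h^\omega(\sigma)$ is purely morphic with $\smatrix{h}$ having dominant eigenvalue $\alpha$. The structural fact from the proof of Theorem~\ref{the:transducts_preserve_morphic} is that $M(x) = \lambda(z)$, where $\lambda \colon \Sigma \times Q \to \Delta^*$ is the output morphism of $M$ and $z = c(\overline{h}^\omega((\sigma,\Theta(\emptyword))))$ is the state-annotated sequence of Lemma~\ref{lem:state_annotated_is_morphic}, built from the annotation $\overline{h}$ of Definition~\ref{Def-hbar} and the coding $c$ of Lemma~\ref{lem:state_annotated_is_morphic}(ii).

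Now the eigenvalue bookkeeping does the work. Since $\overline{h}$ is an annotation of $h$ (noted right after Definition~\ref{Def-hbar}), Corollary~\ref{corollary:dominant} gives that $\overline{h}$ and $h$ share the dominant eigenvalue $\alpha$; hence $\overline{h}^\omega((\sigma,\Theta(\emptyword)))$ is $\alpha$-substitutive, and since applying a coding preserves $\alpha$-substitutivity, so is $z$. Because $M$ is non-erasing its output morphism $\lambda$ is non-erasing, so Theorem~\ref{thm:morphism} yields that $M(x) = \lambda(z)$ is $\alpha$-substitutive, completing the lemma. Finally, $\alpha$ and $\beta$ are dominant eigenvalues of incidence matrices and hence Perron numbers, so their multiplicative independence lets us invoke Theorem~\ref{theorem-Durand}: the sequence $u$, being both $\alpha$- and $\beta$-substitutive, is eventually periodic.

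The main obstacle is the careful maintenance of condition~(ii) of Definition~\ref{definition-alpha-substitutive} (that every letter occurs in the iterated word) as one passes through annotation, coding, and the final morphism. Concretely, one must restrict the annotated alphabet to the letters actually occurring in $\overline{h}^\omega((\sigma,\Theta(\emptyword)))$ and check (in the spirit of Lemma~\ref{lem:matrix:zero}) that this restriction leaves the dominant eigenvalue unchanged, and one must extend $\lambda \colon \Sigma \times Q \to \Delta^*$ to a non-erasing morphism on $(\Sigma \times Q) \cup \Delta$ so that Theorem~\ref{thm:morphism}, which is stated for endomorphisms, applies. A last point to secure is that the substitutivity parameters really are Perron numbers, which is what licenses the use of Durand's theorem.
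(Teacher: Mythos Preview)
Your proposal is correct and follows essentially the same approach as the paper: decompose the transduct as a non-erasing morphic image of (a coding of) the fixed point of an annotated morphism, invoke Corollary~\ref{corollary:dominant} to preserve the dominant eigenvalue under annotation, invoke Theorem~\ref{thm:morphism} to preserve it under the non-erasing output morphism, and conclude with Durand's theorem. The paper's own proof is terser and does not explicitly flag the technical points you raise in your final paragraph (maintaining condition~(ii) of Definition~\ref{definition-alpha-substitutive}, adapting $\lambda$ to an endomorphism, and that $\alpha,\beta$ are Perron numbers), so your care there is warranted rather than superfluous.
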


\begin{proof} 
  Let $h_v$ and $h_w$ be morphisms whose fixed points are $v$ and $w$, respectively.
  By the proof of Theorem~\ref{the:transducts_preserve_morphic}, 
  $x$ is a morphic image of an annotation  $\hbar_v$ 
  of $h_v$, and also of an annotation $\hbar_w$ of $h_w$.  
  The morphisms must be non-erasing, by the assumption in this theorem.
  By  Corollary~\ref{corollary:dominant} and Theorem~\ref{thm:morphism}, 
  $x$ is both $\alpha$- and $\beta$-substitutive.
  By Durand's Theorem~\ref{theorem-Durand}, $x$ is eventually periodic.
\end{proof}

\subsection{Example}

We conclude the section with an example of Theorem~\ref{the:transducts_preserve_morphic}
and the lemmas in this section.

\begin{example}[]
  We saw the Fibonacci sequence  in Example~\ref{example-Fibonacci}:
  $$x = abaababaabaababaababaabaababaabaababaaba\cdots$$
  We conclude our series of examples pertaining to this sequence
  and the transducer $M$
  which doubles every other letter (see Example~\ref{example-transducer} and Figure~\ref{fig:double}).
  We want to exhibit $\hbar$,
  following the recipe of Lemma~\ref{lem:state_annotated_is_morphic}.  First,  some examples of how $\hbar$ works:
  $$\begin{array}{l@{\hspace{.4in}}l}
    (b, \Theta(a)) \mapsto (a, \Theta(ab)) & (a, \Theta(\e)) \mapsto (a, \Theta(\e))(b, \Theta(a)) \\ 
    (b, \Theta(ab))  \mapsto (a, \Theta(aba)) = (a, \Theta(b)) & 
    (a, \Theta(a)) \mapsto (a, \Theta(ab))(b, \Theta(aba)) = (a, \Theta(ab))(b, \Theta(b))
  \end{array}$$

  It turns out that only a few elements from this $A$ end up appearing in the expressions for
  $\hbar(\sigma, \Theta(w))$: 
  It is convenient to abbreviate some of the elements of $\Sigma\times A$:  
  Let us use $x$ as an element of $\set{a,b}$, and  also
  write 
  $(x,\Theta(\e))$ as $x_0$, 
  $(x,\Theta(a))$ as $x_1$,
  $(x,\Theta(b))$ as $x_2$
  and $(x,\Theta(ab))$ as $x_3$.
  It turns out that we do not need to exhibit $\hbar$ in full because only eight points are reachable
  from $a_0$.
  We may take $\hbar$ to be 
  $$\begin{array}{l@{\hspace{.4in}}l@{\hspace{.4in}}l@{\hspace{.4in}}l}
    a_0 \mapsto a_0b_1 &  a_1  \mapsto  a_2 b_3 &  a_2  \mapsto a_3b_2 &  a_3   \mapsto a_1b_0 \\
    b_0  \mapsto a_0 &  b_1  \mapsto  a_2  & b_2  \mapsto a_3 &  b_3  \mapsto a_1
  \end{array}$$

  The fixpoint of this morphism starting with $a_0$ starts as $$ y = \hbar^\omega(a_0) = a_0 \ b_1 \ a_2 \ a_3 \ b_2 \ a_1 \ b_0 \ a_3 \ a_2 \ b_3 \ a_0 \ a_1 \ b_0 \ a_3 \ b_2 \ a_1 \ a_0 \ b_1 \ a_2 \ b_3 \ a_0 \ a_1 \ b_0 \ a_3 \ a_2 \ b_3 \ a_0 \ b_1 \  \cdots$$
  Turning to the coding $c$, recall that the set $Q$ of states of $M$ is $\set{s,t}$. 
  Let us abbreviate the elements of $\Sigma\times Q$ the same way we did
  with $\Sigma\times A$.   
  It is not hard to check that $c(\sigma_0) = \sigma_s$, $c(\sigma_1) = \sigma_t$,   
  $c(\sigma_2) = \sigma_s$, and
  $c(\sigma_3) = \sigma_t$.
  Then the state-annotated
  sequence $z$ from Lemma~\ref{lem:state_annotated_is_morphic} is
  $$z = c(y) = 
  a_s \ b_t \ a_s \ a_t \ b_s \ a_t \ b_s \ a_t \ a_s \ b_t \ a_s \ a_t \ b_s \ a_t \ b_s \ a_t \ a_s \ b_t \ a_s \ b_t \ a_s \ a_t \ b_s \ a_t \ a_s \ b_t \ a_s \ b_t \  \cdots$$

  Recall that $\lambda:\Sigma\times Q \to \Delta^* = \Sigma^*$ 
  in our transducer doubles whatever letter it sees while in state $s$ and copies whatever letter it sees while in state $t$.
  That is, $\lambda(x_s) = xx$, and $\lambda(x_t) = x$.
  Thus when we apply the morphism $\lambda$ to the sequence $z$, we get
  $$\lambda(z) = aa \ b \ aa \ a \ bb \ a \ bb \ a \ aa \ b \ aa \ a \ bb \ a \ bb \ a \ aa \ b \ aa \ b \ aa \ a \ bb \ a \ aa \ b \ aa \ b \  \cdots$$
  As we saw in the proof of Theorem~\ref{the:transducts_preserve_morphic}, this
  sequence 
  $$aa  b  aa  a  bb  a  bb  a  aa  b  aa  a  bb  a  bb  a  aa  b  aa  b  aa  a  bb  a  aa  b  aa  b   \cdots$$
  is exactly $M(x)$.  
\end{example}

\section{Conclusion}

We have re-proven some of the central results in the area of morphic sequences, the closure
of the morphic sequences under morphic images and transduction.    However,
the main results in this paper come from the eigenvalue analyses which followed our proofs
in Sections~\ref{section-morphic-images} and~\ref{section-closure-transduction}.
These are some of the only results known to us which enable one to prove 
negative results on the transducibility relation $\trianglelefteq$.
One such result is in 
Theorem~\ref{theorem-apply-Durand}; this is perhaps the culmination of this paper.

The next step in this line of work is to weaken the hypothesis in some of results that 
the transducers be \emph{non-erasing}. Although our results can be used to reason
about erasing morphisms, see Remark~\ref{rem:erasing:morphism},
this does not help us with erasing transducers since annotating a morphism 
can yield an unbounded large alphabet.
As a consequence, to reason about erasing transducers, we need to understand better
what form of annotated morphisms arise from transducers,
and how these interact with the erasure of letters (Proposition~\ref{prop:close:erase}).

\bibliographystyle{plain}
\bibliography{main}


\end{document}